\newcommand{\mypara}[1]{\noindent\textbf{#1.}}
\newcommand{\del}{\texttt{delete}}
\newcommand{\delargs}{\texttt{delete(key)}}
\newcommand{\searchargs}{\texttt{search(key, target)}}
\newcommand{\insargs}{\texttt{insert(key, val)}}
\newcommand{\findargs}{\texttt{find(key)}}
\newcommand{\key}{\texttt{key}}
\newcommand{\search}{\texttt{search}}
\newcommand{\find}{\texttt{find}}
\newcommand{\ins}{\texttt{insert}}
\newcommand{\val}{\texttt{val}}
\newcommand{\var}[1]{\texttt{#1}}
\newcommand{\tagged}{\texttt{TaggedInternal}}
\newcommand{\fixtagged}{\texttt{fixTagged}}
\newcommand{\fixunderfull}{\texttt{fixUnderfull}}
\newcommand{\pathinfo}{\texttt{PathInfo}}
\newcommand{\elimrec}{\texttt{ElimRecord}}
\newcommand{\lockOrElim}{\texttt{lockOrElim}}
\newcommand{\searchleaf}{\texttt{searchLeaf}}
\newcommand{\lfab}{LF-ABtree}
\newcommand{\occ}{OCC-ABtree}
\newcommand{\elim}{Elim-ABtree}
\newcommand{\pmocc}{p-OCC-ABtree}
\newcommand{\pmelim}{p-Elim-ABtree}
\begin{document}

\title{Elimination (a,b)-trees with fast, durable updates}

\author{Anubhav Srivastava}
\email{anubhav.srivastava@uwaterloo.ca}
\affiliation{%
  \institution{University of Waterloo}
  \city{Waterloo}
  \country{Canada}
}

\author{Trevor Brown}
\authornote{Corresponding author}
\email{trevor.brown@uwaterloo.ca}
\affiliation{%
  \institution{University of Waterloo}
  \city{Waterloo}
  \country{Canada}
}

\renewcommand{\shortauthors}{Srivastava and Brown}

%
%
\begin{CCSXML}
<ccs2012>
<concept>
<concept_id>10003752.10003809.10011778</concept_id>
<concept_desc>Theory of computation~Concurrent algorithms</concept_desc>
<concept_significance>500</concept_significance>
</concept>
</ccs2012>
\end{CCSXML}

\ccsdesc[500]{Theory of computation~Concurrent algorithms}

\keywords{Concurrent data structures, optimistic concurrency, elimination, B-trees} 

\lstset{
    escapechar=@,
    firstnumber=last,
    belowcaptionskip=1\baselineskip,
    breaklines=true,
    frame=single, 
    xleftmargin=\parindent,
    language=C++,
    showstringspaces=false,
    basicstyle=\scriptsize\ttfamily, 
    keywordstyle=\bfseries\color{green!40!black},
    commentstyle=\itshape\color{gray!60!black},
    identifierstyle=\color{black!50!black},
    stringstyle=\color{orange},
    numbers=left,                    
    numbersep=2mm,                   
    xleftmargin=3mm,
    numberstyle=\tiny\color{black}, 
    morekeywords={procedure,then,word,type,xbegin,xabort,xend,is,in,to,until},
}



\begin{abstract}
Many concurrent dictionary implementations are designed and optimized for read-mostly workloads with uniformly distributed keys, and often perform poorly on update-heavy workloads.
In this work, we first present a concurrent (a,b)-tree, the \occ, which outperforms its fastest competitor by up to 2x on uniform update-heavy workloads, and is competitive on other workloads.
We then turn our attention to \textit{skewed} update-heavy workloads (which feature many inserts/deletes on the same key) and introduce the \elim, which features a new optimization called publishing elimination.
In publishing elimination, concurrent inserts and deletes to a key are reordered to \textit{eliminate} them.
This reduces the number of writes in the data structure. The \elim\ achieves up to 2.5x the performance of its fastest competitor (including the \occ).
The \occ\ and \elim\ are linearizable. We also introduce durable linearizable versions\footnote{Technically, we show that our data structures satisfy a stronger correctness condition called \textit{strict linearizability}~\cite{StrictLinearizability}.} for systems with Intel Optane DCPMM non-volatile main memory that are nearly as fast.
\end{abstract}

\maketitle

\section{Introduction}
The (ordered) dictionary is one of the most fundamental abstract data types.
It stores a set of keys, each of which has an associated value, and provides operations to insert a key and value, remove a key, and find the value associated with a key.
Sometimes dictionaries also support predecessor, successor, and range query operations.


Concurrent dictionary implementations in the literature typically focus on maximizing performance under low contention read-mostly workloads, 
with less attention paid to performance under update-heavy workloads and high contention workloads.
In this paper, we study the question of how to scale these challenging workloads, ideally without sacrificing performance in the read-mostly workload.
Update-heavy workloads are particularly difficult to scale when there is a lot of memory contention.
To generate high contention, we study Zipfian access distributions, in which the frequency of a key being accessed is inversely proportional to its rank.
That is, the $k$th most frequent key is requested with probability proportional to $1/k^s$, where $s$ is a parameter controlling the skew of the distribution.


The advantages of concurrent B-trees over binary search trees, including better cache locality, 
are well known.
Our new data structures presented in this paper are (a,b)-trees, which are a variant of B-trees that allow between $a$ and $b$ keys per node (for $a \leq b/2$).
Our trees are based on the (non-concurrent) \textit{relaxed} (a,b)-tree of Larsen and Fagerberg~\cite{Larsen}.
Relaxed (a,b)-trees are more concurrency friendly than B-trees.
They break insert and delete operations, and any subsequent rebalancing, into multiple \textit{sub-operations} (each of which modifies at most four nodes).
As long as each \textit{sub-operation} is atomic, the relaxed (a,b)-tree's structure and balance properties are maintained.
Implementing these sub-operations atomically requires less synchronization than implementing traditional (sequential) B-tree operations atomically, since B-tree operations must sometimes rebalance an entire root-to-leaf path.

Relaxed (a,b)-trees have been implemented in a concurrent setting before~\cite{TrevorThesis,TrevorHTMTemplate,EBRRangeQueries}, but the overheads of existing implementations are high, and they perform poorly in update-heavy workloads.
Our first new data structure, an optimistic concurrency control (a,b)-tree (\occ), uses mostly known techniques to avoid the main sources of overhead in those implementations: unnecessary node copying and key sorting in leaves, and various overheads introduced by lock-free synchronization primitives.

The main challenge of creating a \textit{concurrent} relaxed (a,b)-tree is guaranteeing that sub-operations occur atomically, and that searches are correct.
The \occ\ uses fine-grained versioned locks to achieve the former, and version based validation in leaf nodes for the latter.
This locking technique is somewhat similar to OPTIK~\cite{DGT} and the optimistic validation of the AVL tree of Bronson et~al.~\cite{BCCO}.
As our experiments show, the \occ\ outperforms many state-of-the-art data structures on \textit{both} read-mostly and update-heavy workloads.
However, like its competitors, its performance degrades as contention increases.


To optimize for high-contention workloads, we take inspiration from another data structure that tackles extremely high contention: \textit{elimination stacks}~\cite{StackElimination}.
In an elimination stack, whenever a thread experiences contention while accessing the stack, it attempts to synchronize \textit{directly} with another thread performing the opposite operation (push/pop) to complete both operations \textit{without} accessing the stack.

Our second new data structure, the \elim, uses a new type of elimination called \textit{publishing elimination}.
This is a primary contribution in this work.
In publishing elimination, threads that modify a leaf place a \textit{record} of their modification in the leaf itself.
Other threads can then use this record to return from their operation without having to modify the data structure.
In traditional elimination, \textit{pairs} of threads rendezvous and eliminate each others' operations.
In publishing elimination, \textit{many} threads can use a \textit{single record} in a leaf to eliminate their own operations.
The \elim\ is significantly faster than the \occ\ (and prior work) in high contention workloads.



Publishing elimination is especially enticing in systems with Intel Optane (DCPMM) persistent main memory, because fewer \textit{flushes} and high-latency \textit{fence} instructions are needed.
We present durably linearizable~\cite{DurableLinearizability} implementations of the \occ\ and \elim. 
This requires minor modifications to the code to add flushing and fencing as appropriate to ensure that 
each update appears to occur atomically in persistent memory. 
The resulting persistent trees are only slightly slower than their volatile counterparts, offering persistence at nearly the speed of in-memory computing. 


\textbf{Contributions.}
    (1) We present two novel algorithms: \occ\ and \elim\ which outperform the state-of-the-art in many workloads.
    (2) We introduce a novel publishing elimination algorithm that is optimized for our data structures.
    (3) We add persistence to our trees, and present experiments that show the overhead of persistence is low. 
    (4) Our algorithms have strong theoretical properties: deadlock-freedom, linearizability (for the volatile trees) and durable linearizability (for the persistent trees), and can be modified to guarantee logarithmic height bounds with some overhead.

\section{Related work}\label{sec:related}

We briefly survey the state-of-the-art in concurrent ordered dictionaries and contrast with our techniques. We experimentally compare with the bolded data structures.

\vspace{1mm}\mypara{Binary search trees}
Ellen et~al.~\cite{Ellen} introduced the first lock-free external BST.
Searches are implemented the same way as in a sequential BST.
An update operation searches for a target node to modify, then synchronizes by \textit{flagging} or \textit{marking} nodes to indicate that they will be modified.
Other updates that encounter these flags or marks will \textit{help} the operation complete, guaranteeing lock-free progress.
Natarajan and Mittal~\cite{NM} improved upon this design by flagging/marking \textit{edges} instead of nodes, and reducing the amount of memory allocated per update operation (\textbf{NM14}).

Bronson et~al.~\cite{BCCO} propose a partially external balanced BST (\textbf{BCCO10}) that uses optimistic concurrency control to synchronize threads.
They introduce a complex hand-over-hand version number based validation technique to implement fast searches.
Our synchronization technique for updates is somewhat similar to BCCO10, but our searches avoid the complexity of Bronson's hand-over-hand validation transactions.
BCCO10 has previously been shown to be the fastest concurrent BST in search-dominated workloads~\cite{GettingToTheRoot}.

David et~al.~\cite{DGT} propose a set of rules for optimizing concurrent data structures.
They apply these rules to design a straightforward, efficient lock-based external BST (\textbf{DGT15}).

Brown et~al.~\cite{LLXPrimitives} introduced wait-free synchronization primitives (\var{LLX} and \var{SCX}), used them to implement a \textit{template} for lock-free trees, and used the template to produce a (balanced) chromatic tree~\cite{TrevorTemplate}.
Several other concurrent BST algorithms have also been proposed~\cite{Howley, WaitFreeRBTrees, Ramachandran}.

\vspace{1mm}\mypara{B-tree variants}
Brown used the aforementioned template to design a lock-free (a,b)-tree (\textbf{\lfab})
~\cite{TrevorThesis}, based on the same relaxed (a,b)-tree as our \occ~\cite{Larsen}.
Update operations take a read-copy-update approach: inserting or deleting a key involves replacing a tree node with a new copy.
The \lfab\ has been shown to be substantially faster than NM14 and BCCO10, which are among the fastest BSTs~\cite{C-IST}.
As our experiments show, our trees significantly outperform the \lfab\ in many workloads. 


Braginsky and Petrank introduced the first lock-free, linearizable B+tree~\cite{PetrankBTree}.
Each node contains a lock-free linked list of entries implemented using arrays.
Each entry is a key-value pair stored in the same word. 

The Bw-Tree is a lock-free variant of a B+tree that is designed to achieve high performance under realistic workloads~\cite{Bw-Tree}. 
Many of the design decisions made in the Bw-Tree are focused on workloads that do not fit in memory, and incur significant overhead. 
Our experiments include an optimized variant of the Bw-Tree called the \textbf{OpenBw-Tree}~\cite{OpenBw-Tree}. 

The BzTree~\cite{BzTree} simplifies the implementation of the Bw-Tree by using a multi-word compare-and-swap (MwCAS), and results in the paper suggest it is faster than the BwTree.
%
Guerraroui et al. introduced a faster MwCAS algorithm and used it to accelerate the BzTree.
The BzTree can also be made persistent by using a persistent MwCAS.


\vspace{1mm}\mypara{Concurrent tries}
Tries are an alternative to B-trees for implementing concurrent (ordered) dictionaries.
The Masstree~\cite{Masstree} and the Adaptive Radix Tree with optimistic lock coupling (\textbf{OLC-ART})~\cite{ART, ART_OLC} both use optimistic concurrency control techniques.
In both, operations are accelerated using SIMD instructions.
However, they are not strictly comparison-based, and they require the programmer to serialize keys to be binary comparable.
This extra data marshalling is tedious and can add overhead.
Additionally, the shape and depth of the trees are determined by the key distribution, not by the number of keys they contain.\footnote{Height bounds in a trie are logarithmic in the size of the \textit{universe}. Even with path compression, some key distributions can result in deep tries.}
We compare with ART with optimistic lock coupling. ART with optimistic lock coupling has been shown to be faster than Masstree~\cite{ART_OLC}.

\vspace{1mm}\mypara{Distribution/contention aware data structures}
There has been also some work on data structures that are designed to accommodate non-uniform distributions. The concurrent interpolation search tree (\textbf{C-IST}) of Brown et~al.~\cite{C-IST} provides doubly-logarithmic runtime for smooth distributions.
However, its updates are slow.

The splay tree~\cite{SplayTree} is a popular sequential data structure that adapts to non-uniform distributions. After searching for a key, the splay tree performs rotations to move the node containing the key to the root. This reduces future access time for searches on the same key but also introduces a point of contention at the root, which makes the splay tree unsuitable for concurrent use. The \textbf{CBTree}~\cite{CBTree} is a concurrent splay tree-like data structure which uses counting to perform splaying only after a significant number of searches/updates have accessed a node, effectively amortizing the cost of the splay over many operations.

The \textbf{Splay-List}~\cite{Splay-List} is a concurrent variant of a SkipList~\cite{SkipList} that splays by increasing the height of frequently-accessed keys. Like the CBTree, it uses a counter-based approach to amortize the cost of the splaying.

The contention adapting search tree (\textbf{CATree})~\cite{CATree} is a variant of an external search tree with binary internal nodes.
Each external node is a sequential dictionary data structure, protected by a lock.
AVL trees were used as the sequential dictionary in the authors' experiments, as well as our own.
The authors approximate contention at each external node by measuring how often a lock is already acquired when a thread attempts to acquire it.
When sufficient contention is detected at a node, the sequential data structure is split into two and an internal node connecting them is linked into the tree.
Similarly, two adjacent sequential data structures are combined if neither is under contention.



\vspace{1mm}\mypara{General approaches}
There are several universal constructions for transforming sequential data structures into concurrent ones. These come in lock-based, lock-free, wait-free, and even NUMA-aware variants~\cite{PSIM, RedBlue, NUMA-UC}. Though they are simple to use, these constructions either require a copy of the data structure per thread or NUMA node (which is not practical for large data structures) or have a single global bottleneck on updates (e.g. an update log or state object). 

Transactional memory makes it relatively easy to produce concurrent implementations of data structures, but it has significant drawbacks. Hardware transactional memory (HTM) is not universally available, and software transactional memory (STM) adds substantial overhead.
Furthermore, transactional memory is optimized for low-contention workloads.
In the high-contention scenarios we study in this paper, almost all transactions would abort (or serialize) because of data conflicts.
We performed some formative experiments comparing our trees with analogous trees implemented using HTM, STM, and a hybrid of the two (HyTM,~\cite{HyTM}), and found that, while the fastest of these implementations was close in performance to our trees under very low contention, performance degraded drastically under high contention.
We omitted these experiments, as they are only tangentially related to this work. 

\vspace{1mm}\mypara{Elimination}
Elimination was first introduced for use in concurrent stacks by Shavit and Touitou in~\cite{EliminationTrees}, but this implementation was not linearizable.
The first linearizable implementation of elimination in stacks was provided by Hendler et al.~\cite{StackElimination}.
Hendler et al. coordinate the threads using an elimination array that stores ongoing operations' descriptors. 
Without loss of generality, suppose a thread $t$ is performing a push. $t$ first attempts to modify the data structure directly. If it fails, $t$ selects a random slot in the elimination array. If this slot contains a descriptor for a pop, $t$ attempts to eliminate both operations. Otherwise, if the slot is empty, it writes its own descriptor and waits a set amount of time to be eliminated.
Note that it is possible for multiple push-pop pairs to be eliminated at once (at different indices in the elimination array). This is key to the scalability of the algorithm.
Braginsky et al. applied a similar approach to priority queues~\cite{CBPQ}.

\vspace{1mm}\mypara{Combining}
A different approach to tackling high contention workload is combining, in which a \textit{combiner} thread aggregates and performs the operations of many concurrent threads on the data structure.
Drachsler-Cohen and Petrank provide an insightful summary of combining techniques~\cite{LocalCombiningOnDemand}.
Flat combining~\cite{FlatCombining} is one of the most popular techniques.
In flat combining, each thread attempting to update the data structure adds a record of its operation to a global list.
Threads compete to become the combiner by acquiring a global lock.
The combiner scans the entire list of operations, then performs them in some order.

Flat combining introduces higher latency compared to our publishing elimination technique.
Threads must \textit{wait} for the combiner to complete their operations one-by-one, and the wait can be quite long for operations near the end of the list. 

Recently, Drachsler-Cohen and Petrank created a variant of flat combining called local combining on-demand and demonstrated it on a linked list~\cite{LocalCombiningOnDemand}. They perform flat combining at each node in the list.
We tested our trees with a similar technique: We augmented each leaf node with an MCS queue~\cite{MCS} and used the queues to perform flat combining.
We found that this approach was much slower than our publishing elimination technique, in which threads do not have to wait for a combiner.

\vspace{1mm}\mypara{Persistent concurrent trees}
Venkataraman et al. introduced the CDDS-tree, a persistent concurrent B-tree. However, the pseudocode contains a global version number which is a scalability bottleneck~\cite{CDDSTree}.
Yang et al. created the NV-Tree, a persistent B-link tree that outperforms the CDDS B-tree by up to 12x~\cite{NVTree}.
The NV-Tree rebuilds \textit{all} of its internal nodes if \textit{any} internal node becomes too full. 
This can be extremely slow for large trees but occurs less than 1\% of the time in their workloads.
Additionally, the NV-Tree only persists leaf nodes since the entire tree can be recovered from them after a crash. This makes the recovery procedure slow, but avoids some flushes during normal execution.

The \textbf{FPTree} is another persistent concurrent B-tree~\cite{FPTree}. It includes a number of optimizations that make it scale better than the NV-Tree.
Each leaf node includes a one-byte hash of each of its keys, known as a fingerprint. The fingerprints are scanned prior to probing the keys themselves, which limits the average number of key comparisons to 1. This can have a large impact when key comparisons are costly (for example, if the keys are strings).
Like the NV-Tree, the FPTree uses unsorted leaves and only persists leaf nodes.

The \textbf{RNTree} is a persistent concurrent B+tree that uses transactional memory and an indirection array with pointers to key-value pairs in each leaf node~\cite{RNTree}.
The indirection array makes binary searching for a key possible, with the drawback that inserts might require shifting every key-value pointer in the indirection array.



Finally, there are a number of general transformations for making concurrent dictionaries persistent.

RECIPE~\cite{RECIPE} provides general advice on how to make three categories of data structures persistent: those whose updates occur atomically, those whose updates fix inconsistent state, and those whose updates do not fix inconsistent state.
The \occ\ is closest to the third category.
RECIPE offers only a vague idea of how one might transform such a data structure.
In particular they instruct the data structure designer to: ``Add [a] mechanism to allow [updates] to detect permanent inconsistencies. Add [a] helper mechanism to allow [updates] to fix inconsistencies.''
Both of these seem to require deep knowledge of the data structure.
They also introduce fences after each store, whereas we carefully avoid fences where possible in the \occ.

The transformations in NVTraverse~\cite{NVTraverse} and Mirror~\cite{Mirror} both provide durable linearizability, but target non-blocking data structures (and so are not applicable to the \occ). Montage~\cite{Montage} is another transformation which guarantees a weaker correctness condition known as buffered durable linearizability.

\section{\occ}\label{sec:occ_ABtree}
\mypara{Semantics}
The \occ\ implements the following \textit{dictionary} operations. 
\begin{compactitem}
    \item \var{find(k)}: If a key-value pair with key \var{k} is present, return the associated value. Otherwise, return $\bot$.
    \item \var{insert(k, v)}: If a key-value pair with key \var{k} is present, return the associated value. Otherwise, insert the key-value pair \var{<k,v>} and return $\bot$.
    \item \var{delete(k)}: If a key-value pair with key \var{k} is present, delete it and return the associated value. Otherwise, return $\bot$.
\end{compactitem}
Range queries for the trees we present could be added using the techniques described in~\cite{EBRRangeQueries}.

The \occ\ consists of an \var{entry} pointer to a \textit{sentinel node} that is never removed.
This sentinel node has no keys and just one child pointer, which points to the root of the tree.
The pseudocode for the data structures used in the \occ\ and selected operations are presented below.

\vspace{-2mm}
\subsection{Data structures}
\label{sec:structures}

The \occ\ has three types of nodes: leaf nodes, internal nodes and tagged internal nodes.
Leaf nodes store keys and values in their \var{keys} and \var{vals} arrays.
We say an entry in the \var{keys} array is \textbf{empty} if it is $\bot$.
An empty key has no associated value.
The keys in a leaf are \textit{unsorted} and there can be empty entries between keys.
This results in much faster updates since inserts and deletes do not need to shift other keys in the node.

Internal nodes contain $k$ child pointers (between 2 and 11, in our implementation), and $k-1$ \textit{routing} keys (that are used to guide searches to the appropriate leaf) in a \textit{sorted} array. 
Once an internal node is created, its routing keys are \textit{never} changed, but its child pointers can change.
To add or remove a key in an internal node, one must \textit{replace} the internal node.
This happens relatively infrequently.

A \tagged\ node (or simply \textbf{tagged node}) conceptually represents a temporary height imbalance in the tree.
A tagged node is created when a key/value must be inserted into a node but the node is full. The node is split, and the two halves are joined by a tagged node. Tagged nodes are not part of any other operation, and thus always have exactly two children.
Tagged nodes are eventually removed from the tree when the \var{fixTagged} rebalancing step is invoked.

Each node has a \var{lock} field.
We use MCS locks as our lock implementation~\cite{MCS, CompactNUMAAwareLocks}. In MCS locks, threads waiting for the lock join a queue and spin on a \textit{local} bit (meaning they scale well across multiple NUMA nodes). In our trees, a thread only modifies a node if it holds its lock. Leaf nodes have an additional version field, \var{ver}, that records how many times the leaf has changed and whether it is currently being changed.
After acquiring a leaf's lock, a thread increments the version before making any changes to the leaf and increments the version again once it has completed its changes, and finally releases the lock.
Thus the version is even if the leaf is not being modified and odd if it is being modified.
The version is used by searches to determine whether any modifications occurred while reading the keys of a leaf\footnote{A leaf's version field could hypothetically wrap around and cause an ABA problem, but at 100 million updates per second, this would take 2900 years for a 64 bit word size.}.

Nodes also contain a \var{marked} bit, which is set when a node is unlinked from the tree so that updates can easily tell whether a node is in the tree.
Marked nodes are never unmarked.

The \pathinfo\ structure is returned by \search\ and contains the node at which the search terminated, the node's parent and grandparent, the index of the node in the parent's \var{ptrs} array, and the index of the parent in the grandparent's \var{ptrs} array.

\begin{figure}
\centering
\begin{lstlisting}[linewidth=\columnwidth, numbers=left]
// K is key type, V is value type
abstract type Node
  keys      : K[MAX_SIZE]
  lock      : MCSLock @$\label{line:spinlock}$@
  size      : int
  marked    : bool

type Leaf inherits Node
  vals      : V[MAX_SIZE]
  ver       : int

type Internal inherits Node
  ptrs      : Node[MAX_SIZE]

type TaggedInternal inherits Internal

// The result of a search
type PathInfo
  gp     : Node // grandparent
  p      : Node // parent
  pIdx : int
  n      : Node // node
  nIdx : int

type RetCode is SUCCESS or FAILURE or RETRY

// Sentinel node: points to root
entry    : Internal
MIN_SIZE = 2, MAX_SIZE = 11
\end{lstlisting}
\vspace{-5mm}
\caption{\occ\ data structures}
\label{alg:datatypes}
\vspace{-5mm}
\end{figure}

\begin{figure}[t]
\centering
\begin{lstlisting}[linewidth=\columnwidth, numbers=left]
<RetCode, V> searchLeaf(leaf, key)
RETRY:
  ver1 = leaf.ver
  if ver1 is odd
    goto RETRY

  val = @$\bot$@
  for keyIndex = 0 up to MAX_SIZE - 1
    if leaf.keys[keyIndex] = key
      val = leaf.vals[keyIndex]
      break
  ver2 = leaf.ver @$\label{line:searchVer2Read}$@
  if ver1 @$\neq$@ ver2 goto RETRY
  if val = @$\bot$@ return <FAILURE, @$\bot$@>
  else return <SUCCESS, val>

PathInfo search(key, targetNode)
  gp = NULL, p = NULL, pIdx = 0, n = entry, nIdx = 0
  while n is not Leaf
    if n = targetNode break
    gp = p, p = n, pIdx = nIdx, nIdx = 0
    while nIdx < node.size-1 and key @$\geq$@ node.keys[nIdx]
      nIdx++
    n = n.ptrs[nIdx]
  return PathInfo(gp, p, pIdx, n, nIdx)

V find(key)
  path = search(key, NULL)
  rc, val = searchLeaf(path.n, key)
  return val
\end{lstlisting}
\vspace{-5mm}
\caption{\occ\ search operations}
\label{alg:search}
\vspace{-4mm}
\end{figure}

\subsection{Operations}
\begin{figure*}[t]
\vspace{-3mm}
    \centering
    \includegraphics[width=\textwidth]{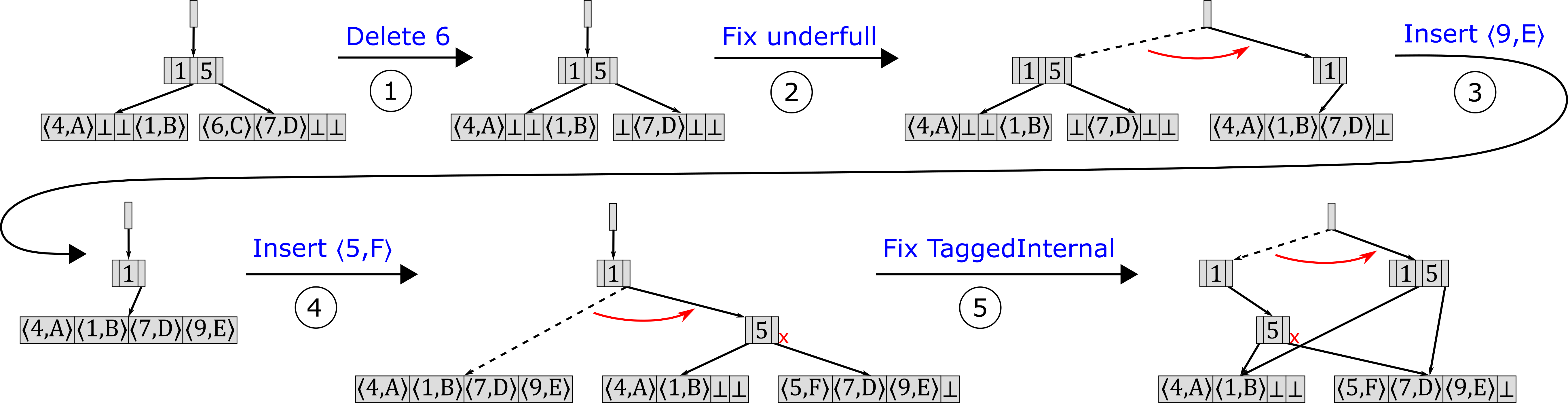}
    \vspace{-6mm}
    \caption{An \occ\ with $a = 2, b = 4$. (1)~The key-value pair $\langle 6,$C$\rangle$ is deleted. This creates an underfull node. (2)~The underfull node is merged with its sibling. This leaves the parent underfull, but the parent is the root, which is allowed to remain underfull. (3)~$\langle 9,$E$\rangle$ is inserted into an empty slot (\textit{simple} insert). (4)~No empty slot exists for $\langle 5,$F$\rangle$, so the appropriate leaf is split and a \var{TaggedInternal} node is created (\textit{splitting} insert). (5)~The \var{TaggedInternal} node is conceptually merged into its parent. We implement this by replacing it with a \textit{new} \var{Internal} node.} 
    \label{fig:ABtreeOps}
\end{figure*}

All operations invoke a common \search\ procedure, which takes a \key\ and optionally a target node as its arguments, and searches the tree, starting at the root, looking for \key.
At each internal node, \search\ determines which child pointer it should follow by traversing the (sorted) routing keys sequentially.
Once \search\ reaches a leaf (or the target node), it returns a \var{PathInfo} object as described in Section~\ref{sec:structures}.

\var{searchLeaf} is similar to the classical double-collect snapshot algorithm~\cite{DoubleCollect}.
It reads the leaf's version, reads its keys and values, then re-reads the leaf's version to verify that the leaf did not change while the keys and values were being read.
If the leaf \textit{did} change, then \var{searchLeaf} retries.
If the key is found, \searchleaf\ returns \var{<SUCCESS, val>}, otherwise, it returns \var{<FAILURE, $\bot$>}. Note that \search\ and \var{searchLeaf} do not acquire locks.
This allows for greater concurrency since internal nodes can be updated \textit{while} searches are traversing through them.

The \findargs\ operation simply invokes \search\ and \searchleaf, and returns \val.
\find\ operations in the \occ\ never have to restart, unlike in other trees.

\begin{figure}
\centering
\begin{lstlisting}
V insert(key, val)
RETRY:
  path = search(key, NULL)
  rc, val = searchLeaf(path.n, key)
  if rc = SUCCESS return val
    
  leaf, parent = path.n, path.p

  Lock leaf
  if leaf.marked @$\label{line:insertLeafMarked}$@
    Unlock leaf and goto RETRY

  // Verify key is not present
  for i = 0 to DEGREE - 1
    if leaf.keys[i] = key
      Unlock leaf and return leaf.vals[i]

  if leaf.size < MAX_NODE_SIZE
    // Insert without splitting
    for i = 0 to DEGREE - 1
      if leaf.keys[i] = @$\bot$@
        leaf.ver++ // Start modification
        leaf.keys[i] = key
        leaf.vals[i] = val
        leaf.size++
        leaf.ver++ // End modification@$\label{line:insSecondInc}$@
        Unlock leaf and return @$\bot$@
  else
    Lock parent
    if parent.marked
      Unlock leaf and parent and goto RETRY

    N = {contents of leaf} @$\cup$@ {key/val}
    newLeaf = TaggedInternal with two children that evenly share N
    parent.ptrs[path.nIdx] = newLeaf@$\label{line:insChangeParent}$@
    node.marked = true
    Unlock leaf and parent
    fixTagged(newLeaf)
    return @$\bot$@
\end{lstlisting}
\vspace{-5mm}
\caption{\occ\ insert operation}
\label{alg:insert}
\vspace{-4mm}
\end{figure}

\begin{figure}
\centering
\begin{lstlisting}
V delete(key)
RETRY:
  path = search(key, NULL)
  rc, val = searchLeaf(path.n, key)
  if rc = FAILURE
    return @$\bot$@

  leaf = path.n
  Lock leaf
  if leaf is marked
    goto RETRY

  for i = 0 to DEGREE - 1
    if leaf.keys[i] = key
      deletedVal = leaf.vals[i]
      leaf.ver++       // Start modification
      leaf.keys[i] = @$\bot$@
      leaf.size--
      leaf.ver++       // End modification

      if leaf.size < MIN_NODE_SIZE
        Unlock leaf
        fixUnderfull(leaf)
  return @$\bot$@
\end{lstlisting}
\vspace{-5mm}
\caption{\occ\ delete operation}
\label{alg:delete}
\vspace{-4mm}
\end{figure}











\vspace{1mm}\mypara{Delete}
The update operations are perhaps best understood with reference to Figure~\ref{fig:ABtreeOps}.
In a \delargs\ operation, a thread first invokes \searchargs\ and \searchleaf.
If it does not find \key, then \del\ returns $\bot$.
Otherwise, it locks the leaf and deletes the key by setting it to $\bot$, and returns the associated value (Figure~\ref{fig:ABtreeOps}(1)).
If \key\ was deleted by another thread between \search\ and acquiring the lock, \del\ returns $\bot$.
If deleting the key makes the node smaller than the minimum size $a$, \var{delete} invokes \fixunderfull\ to remove the underfull node by merging it with a sibling (Figure~\ref{fig:ABtreeOps}(2)).

\vspace{1mm}\mypara{Insert}
In an \insargs\ operation, a thread first invokes \searchargs\ and \searchleaf.
If it finds the \key, then \ins\ returns the associated value (Figure~\ref{fig:ABtreeOps}(3)).
Otherwise, it locks the leaf and tries to insert \key\ (resp., \val) into an empty slot in the \var{keys} (resp., \var{vals}) array. We call this case a \textbf{simple insert}.
If there is no empty slot, \ins\ locks the leaf's parent and \textit{replaces} the pointer to the leaf with a pointer to a new tagged node whose two (newly-created) children contain the leaf's old contents and the inserted key-value pair (Figure~\ref{fig:ABtreeOps}(4)).
We call this case a \textbf{splitting insert}.
The pointer change, and hence the insert of \key, is atomic.
The insert then invokes \fixtagged\ to get remove the tagged node (Figure~\ref{fig:ABtreeOps}(5)).

\vspace{1mm}\mypara{Rebalancing}
\fixtagged\ attempts to remove a tagged node.
It first searches for the tagged node, returning if it is unable to find it. (This case only occurs if another thread has already removed the tagged node)
If \fixtagged\ finds the target node, it tries to get rid of it by creating a copy $c$ of its parent, with the tagged node's key and children merged into $c$, and changing the grandparent to point to $c$ (Figure~\ref{fig:ABtreeOps}(5)). However, if the merged node would be larger than the maximum allowed size, \fixtagged\ instead creates a new node $p$ with two new children, which evenly share the contents of the old tagged node and its parent (Figure~\ref{fig:fixTaggedSplit}). The grandparent is then changed to point to $p$.
($p$ is a tagged node, unless it is the new root, in which case it is simply an internal node).

Now we turn to \fixunderfull. \fixunderfull\ fixes a node $n$ which is smaller than the minimum size, unless $n$ is the root/entry node.
It does this by either distributing keys evenly between $n$ and its sibling $s$ if doing so does not make one of the new nodes underfull (Figure~\ref{fig:fixUnderfull}).
Otherwise, \fixunderfull\ merges $n$ with $s$  (Figure~\ref{fig:ABtreeOps}(2)). In this case, the merged node might still be underfull or the parent node might be underfull (if it was of the minimum size before merging its children).
Thus, \fixunderfull\ is called on the merged node and its parent.
\fixunderfull\ requires that $n$ is underfull, its parent $p$ is not underfull, and none of $n$, $p$, and $s$ are tagged.
If these conditions are not satisfied, \fixunderfull\ retries its search.

\begin{figure}[t]
    \centering
    \includegraphics{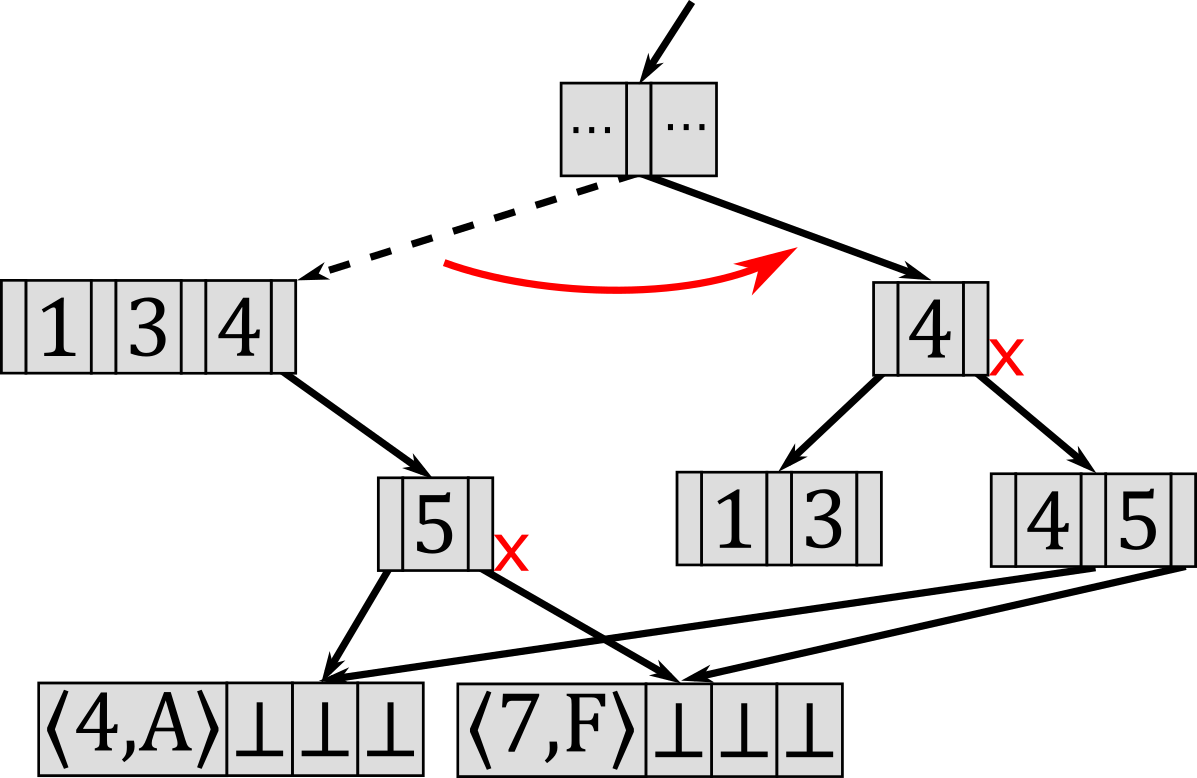}
    \vspace{-3mm}
    \caption{\fixtagged\ \textit{split} case (\textit{merge} is in Figure~\ref{fig:ABtreeOps})}
    \label{fig:fixTaggedSplit}
    \vspace{-3mm}
\end{figure}

\begin{figure}[t]
\centering
\begin{lstlisting}
fixTagged(node)
RETRY:
  if node.marked return
  path = search(node.searchKey, node)
  if path.n @$\neq$@ node return

  Lock path.n, path.p, and path.gp
  if node, parent or gParent is marked or
     path.p is TaggedInternal
    Release all locks and goto RETRY

  node.marked = true
  path.p.marked = true
  if path.p.size + 1 @$\leq$@ MAX_NODE_SIZE
    newNode = new Internal containing the keys & pointers of node and parent
    path.gp.ptrs[path.pIdx] = newNode
    Release all locks
  else
    // newNode is a TaggedInternal, unless it will be the new root (in which case it is Internal)
    newNode = new subtree of three nodes consisting of a new Internal that points to two new internal nodes which evenly share the keys & pointers of node and parent (except for the pointer to node)
    path.gp.ptrs[path.pIdx] = newNode
    Release all locks
      fixTagged(newNode)
\end{lstlisting}
\vspace{-5mm}
\caption{\occ\ fixTagged rebalancing step}
\label{alg:fixTagged}
\end{figure}

\vspace{-2mm} \subsection{Correctness}
\label{sec:correctness}
\vspace{-1mm}
This section proves that the \occ\ is linearizable. Recall that an algorithm is linearizable if, in every concurrent execution, every operation appears to happen \textit{atomically} at some point between its invocation and its response.

Proving the linearizability of the \occ\ requires a definition linking the \textit{physical} representation of the \occ\ (i.e. the contents of the system's memory) to the \textit{abstract} dictionary it represents. The operations are then shown to modify the physical state of the tree in a way that is consistent with the abstract semantics described at the beginning of Section~\ref{sec:occ_ABtree}.

\subsubsection{Definitions}
\begin{definition}[Reachable node]
\label{def:nodeInAbtree}
A node is said to be \textbf{reachable} if it can be reached by following child pointers from the entry node.
\end{definition}

\begin{definition}[Key in \occ]
\label{def:keyInAbtree}
Let $l$ be a reachable leaf. $k$ is \textbf{in the \occ} if, when $l$'s version was last even, $k$ was in $l$'s keys array.
Furthermore, if $k$ is the $i$th key in $l$, the value associated with $k$ is \var{l.vals[i]}.
\end{definition}

In other words, a key is logically inserted or deleted when a thread increments the version number of the leaf for the second time (making it even).

\begin{figure}[t]
    \centering
    \noindent\hspace{-3mm}\includegraphics{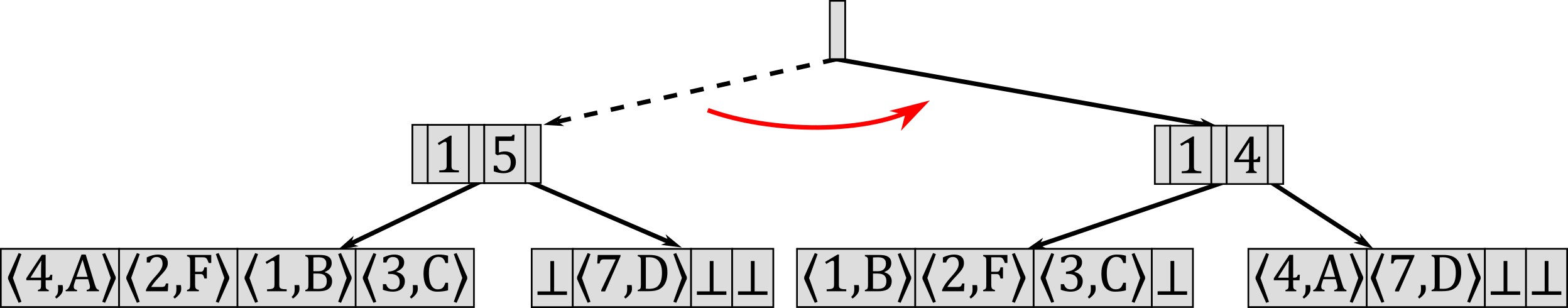}
    \vspace{-7mm}
    \caption{\fixunderfull\ \textit{distribute} case (\textit{merge} is in Figure~\ref{fig:ABtreeOps})}
    \label{fig:fixUnderfull}
    \vspace{-4mm}
\end{figure}

\begin{figure}[t]
\centering
\begin{lstlisting}
fixUnderfull(node)
  if node = entry or node = entry.ptrs[0] return

RETRY:
  path = search(node.searchKey, node)
  if path.n @$\neq$@ node return

  if path.nIdx = 0
    sIndex = 1 // Sibling is right child
  else
    sIndex = path.nIdx - 1 // 
  sibling = parent.ptrs[sIndex]

  Lock node, sibling, path.p, and path.gp
  if node.size @$\geq$@ MIN_NODE_SIZE return
  if parent.size < MIN_NODE_SIZE or
     node, sibling, parent, or gParent is marked or
     node, sibling or parent is TaggedInternal
    Release all locks and goto RETRY

  if node.size + sibling.size @$\leq$@ 2 * MIN_NODE_SIZE
    newNode, sibling = Distribute keys of node and sibling evenly amongst new node and sibling
    newParent = copy of parent plus pointer to newNode and key between newNode and sibling
    gParent.ptrs[path.pIdx] = newParent
      Mark node, parent, and sibling
      Release all locks and return
  else
    newNode = Combined keys of node and sibling
    if gParent = entry and parent.size = 2
      entry.ptrs[0] = newNode
      Mark node, parent, and sibling
      Release all locks and return
    else
      newParent = copy of parent with pointer to newNode instead of node/sibling
      path.gp.ptrs[path.pIdx] = newParent
      Mark node, parent, and sibling
      Release all locks
      fixUnderfull(newNode)
      fixUnderfull(newParent)
\end{lstlisting}
\vspace{-5mm}
\caption{\occ\ fixUnderfull rebalancing step}
\label{alg:fixUnderfull}
\vspace{-4mm}
\end{figure}

Definition~\ref{def:keyInAbtree} is somewhat counter-intuitive.
One might consider the following simpler definition: a key $k$ is in the tree if it is in some leaf's keys array.
Indeed, this alternate definition can also be used to prove that the \occ\ is linearizable. However, Definition~\ref{def:keyInAbtree} is necessary for the correctness of publishing elimination (Section~\ref{sec:elim}). Using a consistent definition hopefully makes the correctness argument easier for the reader.

There are two more definitions which are used in the proofs throughout this paper.
The \textbf{key range} of a node is a half-open subset of the universe of keys (e.g. $[100, 200)$ if the keys are numbers, or $[``aardvark", ``apple")$ if the keys are strings). Intuitively, the key range of a node is the set of keys that are allowed to appear in the subtree rooted at that node.
\begin{definition}[Key range]
\label{def:keyRange}
The key range of the entry node is defined to be the universe of keys. Let $n$ be a reachable internal node with key range $[L, R)$. If $n$ has no keys, the key range of its child is also $[L, R)$. Otherwise, suppose $n$ contains keys $k_1$ to $k_m$. The key range of $n$'s leftmost child (pointed to by \var{n.ptrs[0]}) is $[L, k_1)$, the key range of $n$'s rightmost child (pointed to by \var{n.ptrs[m]}) is $[k_m, R)$, and the key range of any middle child pointed to by \var{n.ptrs[i]} is $[k_i, k_{i+1})$.
\end{definition}

Finally, the \occ\ (along with all other trees in introduced in this paper) is a relaxed (a,b)-tree, as introduced by Larsen and Fagerberg~\cite{Larsen}. The relaxed (a,b)-tree is a search tree (as defined below). The most important consequence of the \occ\ being a search tree is that, for any key $k$ in the universe of keys, there is a unique search path for $k$, and this path passes through every reachable node whose key range contains $k$. Intuitively, this path is the path an atomic search of $k$ would take. Note that the uniqueness of the path implies that there is a unique reachable leaf in the search tree whose key range contains $k$.
\begin{definition}[Search Tree]
\label{def:searchTree}
Suppose $n$ is an internal node in a tree and $k$ is a key in $n$. A tree is a search tree if
\begin{itemize}
    \item All keys in the subtrees to the left of $k$ in $n$ are less than $k$ AND
    \item All keys in the subtrees to the right of $k$ in $n$ are greater than or equal to $k$
\end{itemize}
\end{definition}

\subsubsection{Invariants}
Proving an \ins\ is correct requires proving that \search\ finds the correct leaf to insert into. For \search\ to find the correct leaf, the tree must satisfy some structural properties, which are only satisfied if \textit{previous} inserts and deletes were correct.
We deal with this cyclical dependency is by assuming a set of invariants about the structure of the tree. These invariants hold for the initial state of the tree, and every modification to the tree preserves all invariants.
These invariants can then be used to prove the linearizability of the data structure.

\begin{theorem}[\occ\ Invariants]
The following invariants are true at every configuration in any execution of the \occ.
\begin{enumerate}
    \item The reachable nodes form a relaxed (a,b)-tree. \label{inv:searchTree}
    \item The key range of a node that was once reachable is constant. \label{inv:keyRangeConstant}
    
    \item A node that is not reachable contains the same keys and values that it contained when it was last reachable and unlocked (i.e. updates do not both unlink and modify a node). \label{inv:noModifyAndUnlink}
    \item A key appears at most once in a leaf. \label{inv:noDupKey}
    \item If a node was once reachable, and is currently unmarked, it is still reachable. \label{inv:linkedIfUnmarked}
    
    \item If a node is unlocked and was once reachable, its \var{size} field matches the number of keys it contains.
    \label{inv:sizeCorrect}
    \item The key range of \var{n} in \searchargs\ contains \key.
    
    \label{inv:searchCorrect}
\end{enumerate}
\end{theorem}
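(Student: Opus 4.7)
The plan is to prove all seven invariants simultaneously by induction on the sequence of atomic steps in an execution. The base case is the initial tree, which consists of the sentinel \var{entry} with a single empty leaf; every invariant holds vacuously or by inspection. For the inductive step I would enumerate the kinds of atomic step that can change the tree's state, verify that none of the invariants break, and use the earlier invariants (at the pre-step configuration) to discharge later ones (at the post-step configuration). The relevant steps are: (a) a single write to a leaf's \var{keys}, \var{vals}, \var{size}, or \var{ver} field inside the locked, marked-checked critical section of \ins\ or \del; (b) a pointer swap in a parent/grandparent/entry node that replaces a subtree (splitting \ins, \fixtagged\ in either the merge or the split case, \fixunderfull\ in either the distribute or the merge case, and the root-replacement special case); and (c) the write that sets a node's \var{marked} bit.

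Invariants 1, 2, and 4 follow from two structural observations. First, by inspection of the code the routing keys of an internal node are written only at construction and are never overwritten afterward. Second, every replacement subtree installed by a pointer swap is built to contain exactly the multiset of keys of the region it replaces, modified by at most the one key being inserted or deleted, and is spliced in under the same pair of surrounding routing keys in the parent. This lets the combinatorial arguments of Larsen--Fagerberg for each rebalancing rule carry through unchanged. Invariant 4 is preserved by the explicit linear scan that an \ins\ performs after it acquires the leaf lock. For invariants 3 and 5, I would argue from the locking protocol: every write to a node happens while its lock is held, and every step that unlinks a node also sets \var{marked} on that node while the same lock is still held (just before the unlink in \fixtagged, just after in \fixunderfull); since any subsequent writer first acquires the lock and then checks \var{marked}, no further modification can occur after the unlink, and any once-reachable node that is currently unmarked must still be reachable. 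Invariant 6 is maintained because every critical section that changes a leaf's key count adjusts \var{size} in the same locked region, and every freshly constructed node is built with \var{size} equal to its actual contents.

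Invariant 7 is the delicate one, and I expect it to be the main obstacle: it is not a property of a single configuration but of a concurrent search's traversal, and the traversal is entirely lock-free and unvalidated. I would strengthen it to a loop invariant for \search: at every iteration of the \textbf{while} loop, the node \var{n} held by the thread was once reachable and its key range contains \key. The base case is trivial because \var{n} starts as \var{entry}, whose key range is the entire universe. For the step, suppose the claim holds at some internal \var{n}. The routing keys of \var{n} are immutable (by the observation above), so the index \var{nIdx} computed from \var{n.keys} and \var{n.size} selects the unique child slot whose key range contains \key. The pointer \var{n.ptrs[nIdx]} either still holds the original child or has been overwritten by a pointer-swap step; in the latter case invariants 1 and 2 (together with the lemma that installed subtrees occupy exactly the slot they replace) guarantee that the new child has the same key range as the old one. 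Either way the thread advances to a node whose key range contains \key. When the loop exits at a leaf or the target, this is precisely invariant 7.

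The hard part will be the bookkeeping that underpins the previous paragraph: showing that each of the four rebalancing cases actually produces a replacement subtree whose root inherits the key range of the node it displaces, phrased relative to the fixed surrounding routing keys in the parent. Once that lemma is established for \fixtagged\ (merge and split) and for \fixunderfull\ (distribute and merge), the argument for invariant 7 becomes routine, the relaxed (a,b)-tree structural properties of invariant 1 follow directly from the Larsen--Fagerberg analysis applied locally at each swap, and invariants 2, 3, 5, and 6 reduce to a straightforward audit of the pseudocode under the locking discipline already described.
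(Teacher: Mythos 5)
Your proposal is correct and follows essentially the same route as the paper: both reduce invariant 1 to the atomicity of the Larsen--Fagerberg sub-operations, derive invariant 2 from the immutability of the routing keys surrounding any retained child pointer, get invariants 3--6 from the lock-then-mark-then-unlink discipline and the locked leaf scan, and prove invariant 7 as a loop invariant of \search\ using key-range constancy of replacement subtrees. The only cosmetic difference is that the paper makes explicit the use of invariant 3 in the step case of invariant 7 (to argue that a pointer read from an already-unlinked node reflects that node's state when it was last reachable), which your plan leaves implicit but clearly has available.
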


Intuitively, invariants \ref{inv:searchTree} to \ref{inv:noDupKey} follow from the sequential correctness of the updates together with the guarantee that any node that might be replaced or modified is locked and reachable until the update occurs. The sequential correctness of the updates  (i.e. their correctness in a single-threaded execution) can be established by inspection of the pseudocode, so we do not prove it in detail. We briefly explain the (concurrent) correctness of invariants \ref{inv:searchTree} to \ref{inv:noDupKey}. Invariants \ref{inv:linkedIfUnmarked} and \ref{inv:sizeCorrect} are straightforward from the pseudocode. 

Invariant \ref{inv:searchCorrect} is slightly different from the others, in that it is not a \textit{structural} invariant.
Rather, it describes the correctness of one of the operations.
The proof is somewhat involved, so it is proved in detail.

\begin{proof}
The invariants are satisfied at the initial state of the \occ.

\mypara{\ref{inv:searchTree}: \occ\ is a relaxed (a,b)-tree}
The updates to the tree are the same as those described by Larsen and Fagerberg in~\cite{Larsen}. They prove that, if these updates occur atomically, the tree is always a relaxed (a,b)-tree. Thus, the remainder of the proof simply shows that each update affects the tree atomically. This requires proving that for each update:
\begin{itemize}
    \item There is a single step at which the update appears to take place
    \item The update is correct
\end{itemize}
The first condition is simple. Simple inserts and deletes appear when the modified leaf is unlocked, by Definition~\ref{def:keyInAbtree}. All other updates only change a single pointer of a reachable node (to point to the update's newly created nodes).

For the second condition, assume that the updates are sequentially correct. This is easily verifiable by examining the pseudocode in this paper and comparing it to the pseudocode in~\cite{Larsen}.
To establish concurrent correctness from sequential correctness, it is sufficient to show that the update occurs on the correct data (i.e. on the correct node and with any preconditions of the sequential code satisfied), the update affects data that is actually in the tree, and the data used to construct the update does not change while the update is being constructed.

An \insargs\ operation uses the \search\ function to find the leaf in which to insert. By invariant~\ref{inv:searchCorrect}, the leaf's key range contains \key. By invariant~\ref{inv:searchTree} (this invariant), the tree is a relaxed (a,b)-tree and thus a search tree, so there is a unique \textit{reachable} leaf whose key range contains \key. Finally, this leaf is reachable if the insert returns, because \ins\ verifies that the leaf is not marked. Thus, the insert occurs in the correct leaf. A similar argument holds for \del.

The sequential code for the rebalancing steps has some preconditions. The \fixtagged\ rebalancing step requires that the node is tagged, but its parent and grandparent node are not.
\fixunderfull\ requires that none of the involved nodes are tagged, the parent node is not underfull, and the target node is underfull.
This is explicitly verified in both functions.
Thus, the rebalancing steps also act on the correct data.

Each update verifies that all involved nodes are not marked before performing its update. If the node is not marked, it is in the tree until the update itself unlinks the node, by invariant~\ref{inv:linkedIfUnmarked}. Moreover, any children of the node are also in the tree by Definition \ref{def:nodeInAbtree}. Thus, the data used to construct the update is actually in the tree.

Finally, the locks acquired by each update guarantee that any data involved in the update is constant until the locks are released.

\mypara{\ref{inv:keyRangeConstant}: Constant key range}
We must examine places where existing nodes are attached to a new parent and ensure that the key range of all descendent nodes remains the same. This happens in \fixtagged\ and \fixunderfull. In either function, the routing keys surrounding any pointer that is not removed remain the same before and after the update.
Thus, the key range of the pointed to node does not change.
This holds for leftmost and rightmost children of a node too, since the grandparent's key range does not change (by this invariant), and the new parent's key range is the same as the old parent's.

\mypara{\ref{inv:noModifyAndUnlink}: Unreachable nodes contain the same keys and values as they did when they were last reachable and unlocked}
Updates that unlink a node first lock it, then unlink it, then unlock it, \textit{without} changing the node's keys or values.

\mypara{\ref{inv:noDupKey}: No duplicate key}
Insert operations read the whole leaf while it is locked before attempting to insert a key, so a duplicate key is never inserted. The leaf that an insert operation tries to insert into is correct by invariant~\ref{inv:searchCorrect}.

\fixunderfull\ does not create duplicate keys when merging two leaves because there is a unique leaf whose key range contains a given key, and any keys in that key range are only present in that leaf (invariant~\ref{inv:searchTree}). Thus, a key can only be in one of the two leaves and so cannot appear twice in the merged node.

\mypara{\ref{inv:searchCorrect}: Search correctness}
The search maintains the invariant that the key range of the node it is currently reading contains the search key. Call this node $n$. The invariant is satisfied for the entry node, since its key range is the entire key space.
Since the routing keys of an internal node partition its key range, there is a unique child whose key range contains the search key.

Let $c$ be the child followed by the search after reading node $n$.
Even if $n$ is not in the tree when the pointer to $c$ is read by the search, $c$ must have been set as a child of $n$ while $n$ was in the tree, since only nodes in the tree are modified (invariant~\ref{inv:noModifyAndUnlink}). Thus, at the time that $n$ was in the tree and had $c$ as its child, the key range of $c$ contained the search key (Definition \ref{def:keyRange}). Since the key range of a node is constant (invariant \ref{inv:keyRangeConstant}), the key range of $c$ still contains the search key.
\end{proof}

With these invariants proved, the linearizability of the operations can now be established.

\subsubsection{Linearizability of \find}
The leaf at which \searchargs\ terminates was, at some point, the unique leaf that might have contained \key, by invariant~\ref{inv:searchCorrect}.

The search only returns if, during an interval when the leaf was unlocked, it either finds the search key and reads its value or it reads the entire leaf and does not find it. In the former case, we know that this key is unique in the leaf (invariant~\ref{inv:noDupKey}). Since the leaf was unlocked for the entire interval and nodes are not modified while they are unlocked, the result value of \find\ is correct for the leaf state in that interval.

If the leaf was in the tree at any point in this interval, \find\ may linearize at that point and be correct. If the leaf was never in the tree during the unlocked interval, \find\ linearizes at the point just before the leaf was unlinked. The leaf must have been marked when it was unlinked; so, by invariant~\ref{inv:noModifyAndUnlink}, the value returned by \find\ is the same as it would be if the \find\ occurred atomically just before the node was unlinked.

In this case, we must show that the \find\ was concurrent with the point when the leaf was unlinked. Theorem~\ref{the:noBackInTime} implies that the leaf must have been in the tree at some point during \find's invocation of \search. Since (by assumption) the node was not in the tree in the unlocked interval, the \find\ must have been concurrent with its unlinking. Note that the search procedure does not actually read the marked bit to check whether a leaf is in the tree; it is only described here for analysis.

\begin{theorem}
\label{the:noBackInTime}
Each node \search\ visits was in the tree at some time during the search.
\end{theorem}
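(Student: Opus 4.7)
The plan is to induct on the position $k$ of the node $n_k$ in \search's traversal, where $n_0 = \var{entry}$ is the starting point. The base case is immediate, as the entry sentinel is never removed.

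For the inductive step, let $n_{k-1}$ be reachable at some instant $t^{*}$ during the search (inductive hypothesis), and let $t_r$ be the moment during the search at which \search\ reads \var{n_{k-1}.ptrs[i]} and obtains $n_k$. Inspecting \ins, \fixtagged, and \fixunderfull, every in-place pointer update installs a freshly allocated node, so no pointer slot can ever hold the same address twice. Consequently, \var{n_{k-1}.ptrs[i]} has continuously held $n_k$ since its most recent write time $t_w \leq t_r$. I would then case-split on the relative order of $t^{*}, t_w, t_r$ to locate a moment during the search at which $n_k$ is reachable:
\begin{itemize}
\item If $t^{*} \in [t_w, t_r]$, the pointer already holds $n_k$ at $t^{*}$ and $n_{k-1}$ is reachable there, so $n_k$ is reachable at $t^{*}$.
\item If $t^{*} < t_w$, then $n_{k-1}$ was already linked by $t^{*}$ and cannot be under construction at the later time $t_w$; by invariant~\ref{inv:noModifyAndUnlink}, the in-place write at $t_w$ forces $n_{k-1}$ to still be reachable at $t_w$, so $n_k$ becomes reachable at $t_w \in [t^{*}, t_r]$.
\item If $t^{*} > t_r$, then by contiguity of the reachability interval (marked nodes stay marked and are never re-linked), $n_{k-1}$ is reachable on an interval stretching from its unique linking time $t_\ell \leq t_r$ through $t^{*}$, so $n_{k-1}$ (and hence $n_k$) is reachable at $t_r$.
\end{itemize}

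I expect the middle case to be the delicate one: we must carefully rule out the alternative that $t_w$ falls during $n_{k-1}$'s own construction, which is precisely why the hypothesis $t^{*} < t_w$ (placing $n_{k-1}$'s linking strictly before $t_w$) is essential for invoking invariant~\ref{inv:noModifyAndUnlink}. The fresh-allocation property underpinning the entire argument also deserves brief justification by surveying each pointer-writing line of the pseudocode. Once both points are pinned down, the case analysis is exhaustive and the induction closes cleanly.
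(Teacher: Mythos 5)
Your induction travels essentially the same road as the paper's proof: both reduce the theorem to showing that a child pointer read from a once-reachable node yields a node that was itself reachable at some time during the search, and both lean on invariant~\ref{inv:noModifyAndUnlink} (an unlinked node's pointers are frozen at their last-reachable values) together with the fact that a child of a reachable node is reachable. Your first two cases are sound and, repackaged, cover the paper's two cases (parent still reachable at the read; parent unlinked before the read with its pointers frozen). One side remark: the fresh-allocation premise is doing no work --- defining $t_w$ as the \emph{last} write to the slot at or before $t_r$ already guarantees the slot holds $n_k$ throughout $[t_w, t_r]$, and strictly speaking address reuse under epoch-based reclamation would falsify ``no slot ever holds the same address twice'' anyway.

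The genuine gap is in your third case, $t^{*} > t_r$. There you assert that $n_{k-1}$'s linking time satisfies $t_\ell \le t_r$, but nothing you have established delivers this: contiguity of the reachability interval only gives that $n_{k-1}$ is reachable on some $[t_\ell, t_u) \ni t^{*}$, and if $t_\ell > t_r$ the desired conclusion that $n_{k-1}$ is reachable at $t_r$ simply does not follow. Justifying $t_\ell \le t_r$ amounts to showing that a search cannot be positioned at a node before that node has ever been linked into the tree --- which is (a strengthening of) the very statement under proof, so as written the case is circular. The clean repair is to strengthen the induction hypothesis to: $n_k$ was reachable at some time during the search \emph{at or before} the moment the search read the pointer to $n_k$. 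Your cases 1 and 2 already produce witness times ($t^{*}$ and $t_w$, respectively) that are at most $t_r$, so they maintain the strengthened hypothesis, and case 3 becomes vacuous. (The paper's own proof quietly makes the same assumption, treating ``not reachable at the read'' as implying ``was unlinked before the read,'' so you were right to sense a delicate point here; the resolution has to come from the induction hypothesis, not from an unsupported claim about $t_\ell$.)
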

\begin{proof}
The statement is true for the root. If the root is a leaf, the proof is complete. Otherwise, \search\ reads a child pointer from the root. We now show that any child pointer read from a node $n$ which was in the tree at some time during the search points to a child which was also in the tree at some time during the search.

If $n$ is still in the tree at the time the child pointer is read, the child pointed to is also in the tree at that point by Definition \ref{def:keyInAbtree}. Thus the child is also in the tree at some point during the search.

Otherwise, $n$ must have been (atomically) unlinked by some update $U$ at time $t$. The search was concurrent with the unlinking of $n$ since $n$ was in the tree at some point during the search (by assumption) and $n$ was not in the tree when the search read the child pointer.

By invariant \ref{inv:noModifyAndUnlink}, the pointers of $n$ point to its children just before it was unlinked at time $t$. Thus, the child followed by the search procedure was in the tree at some time during the search as well (namely, at $t$).
\end{proof}

\subsubsection{Linearizability of \ins\ and \del}
There are four possible linearization points for an \insargs\ operation. Note that in the final iteration of the \var{RETRY} loop, the leaf $l$ that the \ins\ locks is the \textit{unique} reachable leaf that might contain \key\ since the \occ\ is a search tree (invariant~\ref{inv:searchTree}), \key\ is in $l$'s key range (invariant~\ref{inv:searchCorrect}), and $l$ is not marked (invariant~\ref{inv:linkedIfUnmarked}).

An \ins\ that succeeds in its \search\ is linearized in the same way as a \find\ operation. The return value of the \search\ is the value associated with the key (by the correctness of \find) and is the correct value to return for \ins.

An \ins\ that finds \key\ in the leaf $l$ after acquiring the $l$'s lock (and thus does not modify $l$) may linearize at any point while the $l$'s lock is held because while $l$ is locked, the key cannot be removed from $l$, the key's associated value cannot change, and $l$ cannot be unlinked (since unlinking $l$ would require marking it). Since the leaf's version is even, the associated value is the correct return value according to Definition \ref{def:keyInAbtree}.

An \ins\ that inserts a key-value pair into a non-full leaf $l$ linearizes at its second increment of $l$'s version (which marks the modification as complete).
The key is not in the \occ\ \textit{before} the linearization point since the insert read $l$ while it was locked without finding the key, and $l$ is the unique reachable leaf that might contain $l$. The key \textit{is} in the \occ\ after the linearization point (according to Definition~\ref{def:keyInAbtree}) because the key is added to $l$, $l$ is still reachable, and $l$'s version is even.

For splitting inserts, searches can observe the change as soon as the pointer to the new subtree is written in the parent, since searches do not read locks on internal nodes. Thus, splitting inserts \textit{must} linearize at the write to the parent node.
Suppose a splitting insert writes the new pointer into the parent node $p$ at time $t$. Let $l$ be the leaf that was split and replaced by a tagged node $t$ with children $l_1$ and $l_2$. 
The inserted key is not in the \occ\ \textit{before} the write to $p$ since the insert reads $l$ while it is locked and does not find the key (and $l$ is the unique reachable leaf that might contain the key).
\textit{After} the write to $p$, the inserted key is in the tree because it is in either $l_1$ or $l_2$, both of which are reachable because $p$ is unmarked and thus reachable (invariant~\ref{inv:linkedIfUnmarked}).
The other keys in $l$ are not affected by splitting inserts since they are placed in one of $l_1$ or $l_2$ by the splitting insert.

The returned value of $\bot$ is correct in the above two cases, since the insert succeeded.
The linearization of deletes and justification of return values is similar to the first three cases above.

\subsubsection{Deadlock freedom}
Intuitively, deadlock freedom is guaranteed by locking order: nodes are locked from bottom to top, with ties broken by left-to-right ordering.
Note that the relative ordering never changes between two siblings, nor between parent and child. 

We have also created a version of the \occ\ with a height bounded by $O(\log(n) + c)$ height, where $c$ is the number of threads currently executing an operation on the tree. However, this version is slightly slower and has more complicated rebalancing logic.

\section{Elimination}\label{sec:elim}
We now describe a technique for eliminating \textit{dictionary operations} by carefully choosing the linearization order for concurrent insertions and deletions of the \textit{same key}. 
In the following, we say an insertion or deletion of \key\ is \textit{in progress} after it is invoked and before it returns.

Suppose $O$ is a simple \insargs.
If a deletion of \key\ is in progress when $O$ is linearized, then the delete can be linearized immediately before $O$ and return $\bot$ (without modifying the data structure).
Similarly, if an insertion of \key\ is in progress when $O$ is linearized, then the insert can be linearized immediately after $O$ and return \val.
Since neither of these operations change the data structure (when linearized in this way), an arbitrary number of insertions and deletions of \key\ can be eliminated, provided they are in progress when $O$ is linearized.

The case where $O$ is a successful \delargs\ is similar.
A deletion of \key\ that is in progress when $O$ is linearized can be linearized after $O$ (and return $\bot$), and an insertion of \key\ that is in progress when $O$ is linearized can be linearized before $O$ (and return the value removed by $O$).

\vspace{-1mm}
\subsection{Publishing elimination algorithm}
\vspace{-1mm}
The challenge is now to \textit{detect} insertions and deletions of \key\ that are in progress when $O$ is linearized.
We describe a modified version of the \occ\ called the \elim, in which each leaf additionally stores a summary, called an \elimrec, of the last operation $O$ that \textit{modified} it.
An \elimrec\ contains the following three fields. \key\ (resp. \val) stores the key (resp. value) that $O$ inserted or deleted.
\var{ver} stores a version number that helps an insert or delete determine whether it was in progress when $O$ is linearized.

Concurrent operations use the \elimrec\ to eliminate themselves as follows.
Recall how a simple insert or successful delete $O$ modifies a leaf $l$.
It first increments the version number of $l$ to an odd value $v$, then modifies $l$, then increments $l$'s version number to the even value $v+1$.
It linearizes at this second increment.
$O$ publishes an \elimrec\ \var{rec} in $l$ just after the first increment.
\var{rec.ver} is set to $v$.\footnote{For simplicity, we only eliminate simple inserts and successful deletes. Eliminating splitting inserts would be more complicated and they are not as frequent.}
%

Observe that an insert or delete $O'$ is in progress when $O$ is linearized if the following conditions hold:
\begin{enumerate}[label=\textbf{C\arabic*.}]
    \item $O'$ reads $l.ver$ and sees it is $\le rec.ver$, and 
    \item $O'$ returns after $l.ver > rec.ver$ 
\end{enumerate}

Let us see how an \insargs\ decides whether it can eliminate itself.
The insert first searches towards a leaf.
Once it arrives at a leaf $l$, it optimistically scans $l$ \textit{once} looking for \key.
(In contrast, in the \occ, \var{searchLeaf} is used to repeatedly scan $l$ until it obtains a consistent snapshot of $l$'s contents.)
%

If this single scan is not consistent, then the insert is concurrent with another update, so we try to eliminate it by invoking \lockOrElim\ (Figure~\ref{alg:elimination}).
%
\lockOrElim\ either eliminates the insert and returns \var{<false, rec.val>} (where \var{rec.val} is the value that the insert should return), or acquires the leaf's lock and returns \var{<true, $\bot$>}.
In the latter case, the insert then inserts \var{<key, val>} into $l$ and releases the lock (as in the \occ).

On the other hand, suppose the scan \textit{was} consistent.
If it found \key, then no modification is necessary, and the insert returns.
Otherwise, it will use \lockOrElim\ to try to lock $l$ so it can insert \key.
(If the insert experiences contention while acquiring the lock, it might even be eliminated.)

\begin{figure}
\centering
\begin{lstlisting}[linewidth=\columnwidth, numbers=left]
// K is key type, V is value type
type ElimRecord {key: K, val: V, ver: int}
type Leaf
  ...
  rec: ElimRecord

V insert(key, val)
  ... // Find leaf and search it once
  acq, retval = lockOrElim(leaf, key)
  if not acq
    return retval@$\label{line:retEliminated}$@

  // Did not eliminate, insert as usual
  leaf.ver++
  leaf.rec = <key, val, leaf.ver>@$\label{line:publishRec}$@
  ... // Insert key
  leaf.ver++
  Unlock leaf and return @$\bot$@
  ...

// Returns <true, _> if acquired
// Returns <false, val> if eliminated
<bool, V> lockOrElim(leaf, key)
  startVer = leaf.ver
  while true
    // Try to eliminate self
    do @$\label{line:lockOrElimVer1}$@
      ver1 = leaf.ver
      rec = leaf.rec
      ver2 = leaf.ver@$\label{line:secondRead}$@
    while ver1 is odd or ver1 @$\neq$@ ver2@$\label{line:lockOrElimVer2}$@

    if startVer @$\leq$@ rec.ver and rec.key = key@$\label{line:verCompare}$@
      return <false, rec.val>

    // Cannot eliminate, try to lock
    if leaf.lock.tryLock()@$\label{line:tryLock}$@
      return <true, _>
\end{lstlisting}
\vspace{-5mm}
\caption{Elimination pseudocode}
\vspace{-2mm}
\label{alg:elimination}
\end{figure}

\vspace{1mm}\mypara{How \var{lockOrElim} performs elimination}
In \var{lockOrElim}, the insert attempts to read a snapshot of the leaf's \elimrec.
To do this, it reads the leaf's version (line~\ref{line:lockOrElimVer1}), then reads the \elimrec\ \var{rec}, then re-reads the leaf's version (line~\ref{line:lockOrElimVer2}).
If the reads of the leaf's version return identical results, and the version is even (indicating the leaf is not being modified), then a snapshot was obtained. 
Otherwise, \var{lockOrElim} tries to obtain a snapshot again. 

Once a snapshot is obtained, condition C2 is guaranteed to be satisfied.
To see why, note that the leaf's version is even when it is last read at line~\ref{line:secondRead} by the exit condition of the loop.
But, \var{rec.ver} is always an odd value, thus the version read at line~\ref{line:secondRead} is at least \var{rec.ver+1}.

At line~\ref{line:verCompare}, \lockOrElim\ tries to determine whether condition 1 is satisfied.
If it is, and \key\ matches \var{rec.key}, then this insert can be eliminated. So, \lockOrElim\ returns \var{<false, rec.val>} and \ins\ returns \var{rec.val} at line~\ref{line:retEliminated}.
Otherwise, \lockOrElim\ 
tries to lock the leaf at line~\ref{line:tryLock}.
If it acquires the lock, it returns \var{<true, $\bot$>}. 
If \lockOrElim\ fails to acquire the lock, it attempts to eliminate the insert again.

The elimination of deletes is similar, except that eliminated deletes always return $\bot$ (not \var{rec.val}).
Figure~\ref{fig:pubelim} shows an example of publishing elimination.

\begin{figure}[t]
    \centering
    \includegraphics{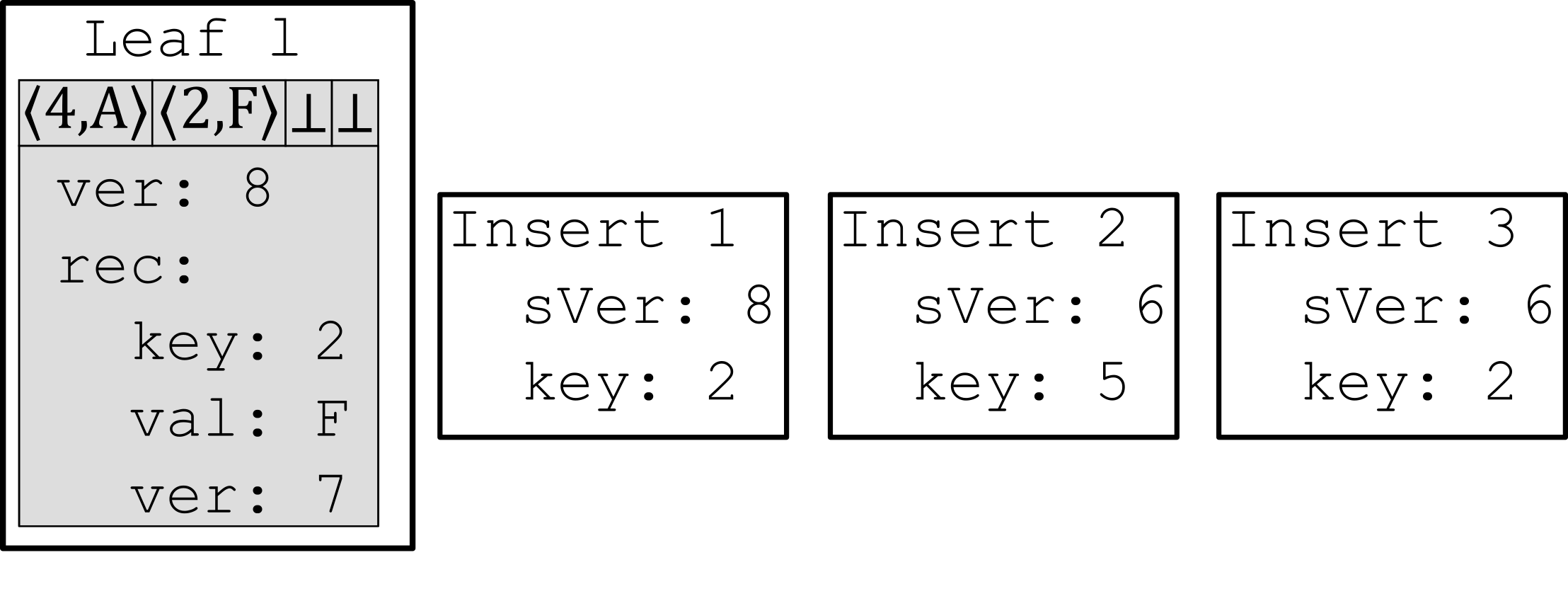}
    \vspace{-5mm}
    \caption{Consider the state of leaf $l$ as shown.
    $l.rec$ stores the \elim\ of a completed simple insert \var{\ins(2,F)}. Consider three (independent) inserts that are attempting to insert in $l$ and are all at line~\ref{line:verCompare}.
    Insert 1 cannot eliminate itself with \var{rec} since the version of the leaf it read is greater than \var{rec.ver}.
    Insert 2 cannot eliminate itself since its key does not match \var{rec.key}. Insert 3 can eliminate itself.}
    \label{fig:pubelim}
    \vspace{-2mm}
\end{figure}



\mypara{Searches could be eliminated}
Finally, we note that the \elimrec\ could also be used to linearize \find s in high-contention workloads.
In some extreme scenarios, this could possibly be useful in preventing \findargs\ from being starved by an endless stream of updates to \key.
We did not observe this in our experiments, since our node size is small enough that searches can typically traverse a leaf in the interval between when one update completes and the next one begins.

\section{Persistent trees}
In this section we describe the changes to make a persistent version of the \occ, the \pmocc. The \pmocc\ persists only its keys, values, and pointers.
Every update in the \pmocc\ appears to occur atomically in persistent memory.
Thus, the recovery procedure for the \pmocc\ is extremely simple: it traverses the tree in persistent memory starting from the root (which is in a known location), and fixes all non-persisted fields (i.e. setting \var{size} to the actual number of pointers/values in the node, and resetting version, lock state, and the marked bit to their initial values).

The updates in the \pmocc\ require the following cache line flushes. (Below, a flush refers to a \var{clwb} instruction followed by an \var{sfence}).
For a simple \insargs, two flushes must be used: \val\ must be flushed after it is written, and \key\ must be flushed after it is written. The insert occurs atomically when the \key\ reaches persistent memory.
Note that if a crash occurs \textit{after} \val\ is flushed but before \key\ is, \key\ is still $\bot$ so the key-value pair is \textit{not} logically in the tree.
For a successful delete, \key\ must be flushed after it is set to $\bot$. The delete occurs atomically when the key field is equal to $\bot$ in persistent memory.

Recall that splitting inserts and rebalancing steps occur atomically in \textit{volatile} memory by creating a set of new nodes and linking them into the tree by changing a single pointer.
We guarantee that these updates appear atomically in \textit{persistent} memory by flushing the new nodes before changing the pointer, then flushing the pointer.
The update occurs atomically when the new pointer is flushed.

Operations in the \pmocc\ must only follow \textit{persisted} pointers.
To see why, consider the following scenario: a splitting \ins\ inserts \key\ and \val, then a \findargs\ operation returns \val, then a crash occurs before the pointer to the new nodes is persisted.
In this case, the recovered tree will \textit{not} contain the inserted key-value pair, so the \find\ cannot be linearized.
To ensure that all operations only access persisted data, we use the link-and-persist method from~\cite{LinkAndPersist} (a similar technique is given in~\cite{PMwCAS}).
In this technique, whenever an update writes a new pointer $p$ into the tree, $p$ is written with a mark on it to indicate that it has not been persisted.
It is then flushed, and the mark is removed.
Whenever a thread encounters a marked pointer, it waits until the mark is cleared (hence the pointer is flushed) before following the pointer.

There are two differences in the linearization points of the \pmocc.
First, splitting inserts must linearize when the new pointer is flushed. Operations cannot access the new key-value pair before this point because the pointer to the tagged node is still marked.
The second change is more subtle.
In the \occ\ and \elim, a simple insert or successful delete $O$ is linearized at the second increment of the version number. 
In the persistent setting, a crash could occur \textit{before} this increment but \textit{after} \key\ has been flushed, so the update will be recovered.
To deal with this, any simple insert or successful delete that flushes \key\ but has not yet incremented version for the second time when a crash occurs is linearized at the time of the crash.
These changes result in a durable linearizable implementation.

The \elim\ can also be made persistent by applying the same changes. We call the resulting tree the \pmelim. The change to the linearization point of $O$ does not affect the correctness argument for publishing elimination, since $O$ can only cause the elimination of another operation \textit{after} $O$ has incremented the version for the second time.

\subsection{\pmocc\ Correctness}
This section begins by providing a definition to link the physical state of the \pmocc\ to its abstract contents. It then mentions some invariants which hold for the \pmocc; these are analogous to the invariants of the \occ\ and can similarly be used to show that the \pmocc\ is strictly-linearizable.

\subsubsection{Definitions}
\begin{definition}[p-Reachable node]
\label{def:pReachable}
A node is \textbf{p-reachable} (short for persistently reachable) if it can be reached from the entry node by following child pointers in \textit{persistent} memory.
\end{definition}

\begin{definition}[Recovering]
\label{def:Recovering}
The system is said to be \textbf{recovering} from the time when a crash occurs until the time when the recovery procedure returns.
\end{definition}

In strict linearizability, every operation that is concurrent with a crash must either be linearized before the crash or be removed from the execution. Any simple insert or successful delete that has flushed a \key\ will be recovered (and thus cannot be removed from the execution). These operations must therefore be linearized before the crash, even if they have not yet incremented the version for the second time.
This is reflected in the definition below, and in the changes to the linearization points in the following section.
\begin{definition}
\label{def:keyInPmAbtree}
Let $l$ be a p-reachable node.
A key $k$ (not equal to $\bot$) is \textbf{in the \pmocc} if either
\begin{enumerate}
    \item The system is recovering and $k$ is in $l$'s keys array OR \label{def:keyInPmAbtree:recovering}
    \item The system is not recovering and $k$ was in $l$'s keys array when $l$'s version (in \textit{volatile} memory) was last even
    \label{def:keyInPmAbtree:normal}
\end{enumerate}
Furthermore, if key $k$ is the $i$th key in $l$, the value associated with $k$ is \var{l.vals[i]}.
\end{definition}

If the system is not recovering, Definition~\ref{def:keyInPmAbtree} is similar to the definition of a key being in the \occ. That is, keys and values are logically added or removed from the tree when the version number is incremented to an even number.
If the system is recovering, however, every key in a p-reachable node is in the tree (the version is ignored).

\subsubsection{Invariants}
\begin{theorem}[\pmocc\ Invariants]

The \pmocc\ satisfies the following invariants, which are analogous to the \occ\ invariants:
\begin{enumerate}
    \item The p-reachable nodes form a relaxed (a,b)-tree.
    \item The key range of a node that was once p-reachable is constant. 
    \item A node that is not p-reachable contains the same keys and values that it did when it was last p-reachable and unlocked (i.e. updates do not both unlink and modify a node).
    \item A key appears at most once in a leaf.
    \item If a node was once p-reachable, and is currently unmarked, it is p-reachable.

    \item If a node is unlocked and was once reachable, its \var{size} field matches the number of keys it contains.
    \item The key range of \var{n} in \searchargs\ contains \key.
\end{enumerate}
\end{theorem}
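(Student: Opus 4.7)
The plan is to mirror the proof of the \occ\ invariants, replacing every occurrence of \emph{reachable} with \emph{p-reachable} and replacing atomicity arguments over shared memory with atomicity arguments over persistent memory. I would proceed by induction over the sequence of steps in an arbitrary execution, where the inductive step considers one of: a store that is not architecturally visible as persistent (still volatile in a cache), a \var{clwb}/\var{sfence} pair that propagates a store into persistent memory, and a crash (which clears all volatile state and triggers the recovery procedure). As in the volatile proof, sequential correctness of the Larsen--Fagerberg updates can be taken for granted, and the essential concurrent claim is that each update appears to affect the p-reachable tree atomically.

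For invariant 1, I would dispatch each update type by identifying the single persistent-memory step at which it appears: a simple \ins\ appears when \key\ is flushed; a successful \del\ appears when \key${}=\bot$ is flushed; and a splitting \ins\ or rebalancing step appears when the newly written parent pointer is flushed (at which point the link-and-persist mark is cleared). The link-and-persist mechanism is what makes these linearizations safe: because threads wait on a marked pointer until it is cleared, no operation follows a pointer into a subtree whose new nodes have not yet been flushed, so the p-reachable tree jumps from the pre-update shape to the post-update shape in one persistent step. Invariants 2, 3, and 4 then follow almost verbatim from the \occ\ arguments, since the routing keys surrounding retained pointers are preserved by \fixtagged\ and \fixunderfull, nodes are locked before being modified or unlinked (and not both), and duplicate keys are ruled out by reading a locked leaf together with invariant 7 plus the fact that a key lies in a unique leaf's key range. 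Invariants 5 and 6 are direct from inspection of the pseudocode together with the observation that marking and the \var{size} field are performed inside the critical section where appropriate.

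For invariant 7, I would replay the inductive argument from the \occ\ proof: the entry node's key range is the whole universe, and whenever \search\ follows a child pointer from an internal node $n$, either $n$ is still p-reachable when the pointer is read (so the child's key range contains \key\ by Definition~\ref{def:keyRange}), or $n$ has been unlinked, in which case invariant 3 (applied to the p-reachable snapshot from before $n$'s unlinking) supplies the same conclusion together with invariant 2 (constancy of key ranges). An analogue of Theorem~\ref{the:noBackInTime} relativized to p-reachability plugs in here and can be proved by exactly the same argument, reading ``unlinked'' as ``pointer update flushed.''

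The main obstacle will be the crash step, which has no counterpart in the \occ\ proof. Here I would argue that immediately after a crash, Definition~\ref{def:keyInPmAbtree}(\ref{def:keyInPmAbtree:recovering}) takes over and the abstract contents are determined purely by the persistent state: any simple insert or successful delete that had flushed its \key\ is treated as linearized before the crash (consistent with strict linearizability), while any in-flight update that had not yet flushed its key, its new subtree, or its new parent pointer is removed from the execution. I would verify that the recovery routine (which only resets \var{size}, \var{ver}, \var{lock}, and \var{marked}) preserves invariants 1--4 because it touches only non-persisted fields, and that it re-establishes invariants 5 and 6 by construction. Once recovery returns and Definition~\ref{def:keyInPmAbtree}(\ref{def:keyInPmAbtree:normal}) resumes, the version discipline is reinitialized with all versions even, so the remainder of the induction proceeds exactly as in the non-crashing case. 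The subtle point to get right is that the change of linearization point for a simple insert/delete in the window between ``\key\ flushed'' and ``second version increment'' does not create a configuration violating any invariant, which follows because no other thread can observe that key as present before the second increment in a crash-free suffix (the leaf's lock is still held), and if a crash intervenes, recovery picks up the key via invariant 3.
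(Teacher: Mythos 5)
Your proposal matches the paper's proof: both reduce to the \occ\ argument and isolate the one genuinely new obligation --- that every node a \search\ traverses was once p-reachable --- which you discharge, as the paper does, via the link-and-persist guarantee that an unmarked pointer has already been flushed, so following unmarked pointers from a p-reachable parent lands on a p-reachable child (with invariant~3 covering the case where the parent has since been unlinked). Your only understatement is calling the p-reachable analogue of Theorem~\ref{the:noBackInTime} ``exactly the same argument'' --- the paper stresses that this is precisely where the extra work lives --- but since you supply the needed unmarked-implies-persisted reasoning anyway, there is no gap; the crash/recovery discussion you add is handled in the paper's subsequent strict-linearizability proof rather than here, and is consistent with it.
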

\begin{proof}
The proofs of most of these invariants are similar to the proofs in Section~\ref{sec:correctness}. The proof for invariant 1 requires an additional explanation of why every node used by an update was once p-reachable.

\mypara{Proof of invariant 1}
Recall that to prove the \pmocc\ is a relaxed (a,b)-tree, it suffices show that for each update:
\begin{itemize}
    \item There is a single step at which the update appears to take place
    \item The update is correct
\end{itemize}
The first condition holds for the reasons laid out in the previous section on atomic updates.

The second condition is largely the same as the proof in the \occ. However, that proof uses invariant~\ref{inv:linkedIfUnmarked}, which requires showing that the nodes traversed in the search were all reachable at some time. This was trivial in the case of the \occ\ (because the nodes are reached by following child pointers), but is not trivial in the \pmocc, which uses \textit{p-reachability}. 

We will show that every node traversed by \search\ in the \pmocc\ was p-reachable at some time.
Assume that every node traversed by a search until node $n$ is p-reachable.
If $n$ is the entry node, it is p-reachable by definition.

Otherwise, the search reached $n$ by following an unmarked pointer from a node $p$.
We will show that there exists a time $t$ when $p$ was p-reachable and contained the unmarked pointer to $n$.
If $p$ was p-reachable when it read the unmarked pointer to $n$, $t$ is the time of the read.
Otherwise, invariant~\ref{inv:noModifyAndUnlink} guarantees that $p$'s pointers have not been modified since it was last p-reachable. Thus, when $p$ was last p-reachable, it contained an unmarked pointer to $n$. In this case, $t$ is the time when $p$ was last reachable.

Finally, notice that there are two ways $p$ could contain an unmarked pointer to $n$.
The first way is if $p$ contained a pointer to $n$ when it was created. 
In this case, since $p$ was flushed before being linked into the tree, its pointer to $n$ is persisted.
Otherwise, the pointer was first introduced to $p$ by an update $U$ as a marked pointer. This update must have flushed the pointer before unmarking it.
In either case, the unmarked pointer was in persistent memory by time $t$.

At time $t$, $p$ was p-reachable and contained a pointer to $n$ in persistent memory.
Thus, $n$ was p-reachable at time $t$.

The remainder of the proof of is similar to the proof for the \occ.
\end{proof}

\subsubsection{Strict linearizability}
The \pmocc\ has slightly different linearization points than the \occ, to deal with the different definition of when a key is in the tree.

Operations which do not modify the tree are linearized as in the \occ.
Splitting inserts in the \pmocc\ are linearized when the pointer to the new nodes is flushed to persistent memory (instead of it is written to volatile memory).

Simple inserts and successful deletes linearize differently depending on whether or not they are interrupted by a crash.
When not interrupted by a crash, simple inserts and successful deletes have the same linearization points as they did in the \occ: the increment of the leaf's version (in volatile memory) to an even number.
Recall that this linearization point is chosen to support publishing elimination.

To see why we cannot linearize the same way when interrupted by a crash, consider the following scenario. 
Suppose a simple insert or successful delete that flushes \key\ (thus making its change persistent) but does increment the version to an even number before a crash.
The recovery procedure would recover this key-value pair, even though the operation was not linearized.

To solve this problem, we linearize these operations \textit{at the crash}.

\begin{theorem}
The \pmocc\ is strictly-linearizable.
\end{theorem}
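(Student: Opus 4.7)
The plan is to mirror the linearizability proof for the \occ, substituting p-reachability for reachability throughout, and to handle the new crash cases separately. First I would fix a history $H$ containing crashes at times $t_1, t_2, \ldots$ and partition every operation of $H$ into three categories: (a) operations that complete in a crash-free interval; (b) operations concurrent with a crash whose effect persists past the crash; and (c) operations concurrent with a crash whose effect does not persist. For (a), I would reuse the linearization points from Section~\ref{sec:correctness}, with two adjustments: splitting inserts and rebalancing steps linearize when the new parent pointer is unmarked (i.e., the flush completes), rather than when it is written. The \textbf{Invariants} theorem for the \pmocc\ stated just above already justifies that \search\ reaches the unique p-reachable leaf whose key range contains \key, so the abstract-dictionary arguments from the \occ\ proof carry over verbatim with Definition~\ref{def:keyInPmAbtree} clause~\ref{def:keyInPmAbtree:normal} replacing Definition~\ref{def:keyInAbtree}.

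Next I would handle (b): a simple insert or successful delete that flushed its \key\ (so the change is in persistent memory) but crashed before the second version increment. By the flush ordering, this operation's effect on the keys array of a p-reachable leaf is visible in persistent memory at the moment of the crash. I would linearize this operation at the crash time $t_i$, just before the system transitions to the recovering state. The pre-crash portion of the history already satisfies Definition~\ref{def:keyInPmAbtree} clause~\ref{def:keyInPmAbtree:normal} up through $t_i$; at $t_i$ the system becomes recovering and clause~\ref{def:keyInPmAbtree:recovering} takes over, under which the flushed key is in the tree. Hence the operation is linearized before the crash with the correct return value ($\bot$ for inserts, the prior value for deletes), satisfying strict linearizability's ``before the crash'' requirement. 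Splitting inserts/rebalancing steps interrupted by a crash are handled analogously: if the new parent pointer was flushed (unmarked) before the crash, linearize at that flush; otherwise, since the link-and-persist marks ensure no post-crash operation can follow the pointer and the recovery procedure will not reach the new subtree, the update is discarded.

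For category (c), operations that crashed before their persisted modification took effect are removed from $H$. I would argue this is sound because no other operation could have observed such an update: simple inserts/deletes cannot be observed by a \find\ or by elimination before the second version increment (in the volatile case) or before the \key\ is written (in the persistent case, since the keys array only holds $\bot$ there); splitting inserts cannot be observed before the unmark, by the link-and-persist invariant that searches wait for marks to clear. Therefore deleting these operations preserves the real-time order and return values of all retained operations.

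The main obstacle I expect is the narrow window inside a simple insert/successful delete between the \key\ flush and the second version increment: here there are two potential linearization points (the second increment for the crash-free tail, or the crash for the interrupted tail), and I must show that no other operation — in particular, no concurrent \find\ or eliminating insert/delete in the same leaf — can observe an inconsistency arising from the asymmetry. This reduces to checking that (i) the leaf is locked throughout the window, so no concurrent modifier can disturb the keys/values; (ii) concurrent \find s read the leaf via \var{searchLeaf}, which only commits when an even version is observed and thus cannot read the half-installed state; and (iii) publishing elimination occurs only after the second increment, as noted in the last paragraph of Section~\ref{sec:elim}, so no post-crash continuation of an eliminated operation exists. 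With these three checks, the linearization assignment above yields a legal sequential history matching the abstract dictionary, and strict linearizability follows.
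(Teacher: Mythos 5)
Your proposal is correct and follows essentially the same route as the paper's proof: linearize crash-free operations as in the \occ\ (with splitting inserts moved to the flush of the new pointer), linearize flushed-but-incomplete simple inserts and successful deletes at the crash, and discard unobservable non-persisted operations concurrent with the crash. The paper merely organizes this as an induction over the eras between crashes, composing the pre-crash, at-crash, and recovery fragments via the locality of strict linearizability; the only piece you omit is the (routine) recovery fragment, i.e., checking that recovery does not change the set of keys and that resetting all versions to $0$ makes clauses~\ref{def:keyInPmAbtree:recovering} and~\ref{def:keyInPmAbtree:normal} of Definition~\ref{def:keyInPmAbtree} agree once recovery returns.
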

\begin{proof}
Note that these linearization points all occur after an operation's invocation and before its response or crash. 
We must show that performing each operation at its linearization point (and returning the appropriate value) correctly affects the contents of the abstract dictionary according to Definition~\ref{def:keyInPmAbtree}.
Let $E$ be an arbitrary execution of the \pmocc.
We prove that $E$ is strictly-linearizable by induction.

Suppose the prefix of $E$ up to the beginning of the $i$th era of operations (including the recovery procedure after the $i-1$th crash, if $i > 1$) is strictly-linearizable, and that the \pmocc\ satisfies all invariants. 
We show that the prefix up to the beginning of the $i+1$th era of operations is strictly-linearizable and the \pmocc\ recovered after the $i$th crash satisfies all invariants.

We do this by breaking up the execution fragment from the beginning of the $i$th era of operations to the beginning of the $i+1$th era of operations into three parts: the execution fragment before the $i$th crash, the $i$th crash, and the recovery after the $i$th crash. 
We show that each fragment is strictly-linearizable by showing that the operations correctly modify the abstract dictionary.
Note that the concatenation of strictly-linearizable execution fragments is strictly-linearizable, by the \textit{locality} property of strict linearizability. 

\vspace{1mm}\mypara{Before the $i$th crash}
We consider the tree operations performed from the beginning of the $i$th era until (but not including) the $i$th crash.
The linearizability arguments for these operations is analogous to the arguments established for linearizability in the \occ:
the linearization points used in this section are all analogous to the \occ's linearization points, the \pmocc\ satisfies analogous invariants, and the definition of a key being in the \pmocc\ is Definition~\ref{def:keyInPmAbtree}.\ref{def:keyInPmAbtree:normal} (which is analogous to the \occ's definition of a key being in the tree).

\vspace{1mm}\mypara{At the $i$th crash}
At the time of the crash, the definition of a key being in the tree changes from Definition~\ref{def:keyInPmAbtree}.\ref{def:keyInPmAbtree:normal} to Definition~\ref{def:keyInPmAbtree}.\ref{def:keyInPmAbtree:recovering}.
It must be shown that the keys in the tree after the crash are exactly those that were in the tree before the crash, plus any that were inserted by a simple insert linearized at the crash and minus any that were deleted by a successful delete linearized at the crash.

First consider the case when a key $k$ \textit{is} in the tree after a crash. That is, there exists some p-reachable leaf $l$ such that \var{k = \texttt{l.keys[i]}} (for some index \var{i}).

If $k$ was also in the tree before the crash (according to Definition~\ref{def:keyInPmAbtree}.\ref{def:keyInPmAbtree:normal}), it is only correct for $k$ to be in the tree after the crash if no delete of $k$ linearized at the crash. This is indeed the case, since a delete of $k$ that linearized at the crash would have set \var{l.keys[i]} to $\bot$ and flushed $\bot$. But, by assumption, $k$ is in the keys array of $l$ after the crash.

Otherwise, if $k$ was not in the tree before the crash, it is only correct for $k$ to be in the tree after the crash if an insert of $k$ \textit{did} linearize at the crash. This is true.
Since $k$ was not in the tree before the crash but $l$ was p-reachable and contained $k$, the $l$'s version must have been odd at the crash (according to Definition~\ref{def:keyInPmAbtree}.\ref{def:keyInPmAbtree:normal}).
Thus, there must have been an ongoing insert that inserted $k$ at the time of the crash. Since the crash occurred after the flush of $k$ but before the version was incremented to an even number, this insert linearized at the crash.

A similar argument shows that $k$ is \textit{not} in the tree after a crash if and only if $k$ was either deleted at the crash or was not in the tree before the crash (and was not inserted at the crash).

\pmocc\ invariants~\ref{inv:searchTree}-\ref{inv:noDupKey} and \ref{inv:searchCorrect} are maintained during a crash since they only describe persisted data. Invariants~\ref{inv:linkedIfUnmarked} and \ref{inv:sizeCorrect} might be incorrect since they refer to volatile fields.
However, they are restored by the recovery procedure.

\vspace{1mm}\mypara{After the $i$th crash (recovery)}
The recovery procedure does not affect the set of p-reachable nodes or their keys or values, so the set of keys in the tree is fixed while the system is recovering.
By the time the recovery procedure returns, all p-reachable nodes' versions are 0, and thus the key in the tree is the same according to Definitions~\ref{def:keyInPmAbtree}.\ref{def:keyInPmAbtree:recovering} and \ref{def:keyInPmAbtree}.\ref{def:keyInPmAbtree:normal}.

Additionally, all \pmocc\ invariants are satisfied by the time the recovery procedure returns.
\pmocc\ invariants~\ref{inv:searchTree}-\ref{inv:noDupKey} and \ref{inv:searchCorrect} were correct before recovery, and the recovery procedure fixes the volatile fields, which ensures that invariants~\ref{inv:linkedIfUnmarked} and \ref{inv:sizeCorrect} hold by the time it returns.

Thus, the execution up to the beginning of the operation in the $i+1$th era is strictly-linearizable, and the \pmocc\ satisfies all invariants.
\end{proof}

The proofs for the \pmelim\ is similar.
Note that elimination does not conflict with the change of linearizing some operations at a crash.
In both the \pmocc\ and the \pmelim, an operation $O_e$ is only eliminated \textit{after} the successful operation $O_p$ has executed its second increment of the leaf's version. Any simple insert or successful delete has linearized by this time (and a future crash does not change this fact).




\section{Experiments}\label{sec:Exp}
\begin{figure*}[t]
    \centering
    \setlength\tabcolsep{0pt}
    \vspace{-1mm}
\begin{minipage}{1\linewidth}
    \centering
    \begin{tabular}{m{0.04\linewidth}m{0.48\linewidth}m{0.48\linewidth}}
        &
        \fcolorbox{black!50}{black!20}{\parbox{\dimexpr \linewidth-2\fboxsep-2\fboxrule}{\centering {Zipf parameter = 0 (Uniform)}}} &
        \fcolorbox{black!50}{black!20}{\parbox{\dimexpr \linewidth-2\fboxsep-2\fboxrule}{\centering {Zipf parameter = 1 (Skewed)}}}
        \\
        \rotatebox{90}{100\% updates} &
        \includegraphics[width=\linewidth]{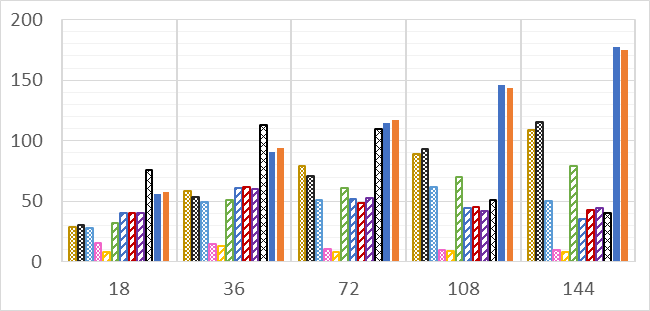} &
        \includegraphics[width=\linewidth]{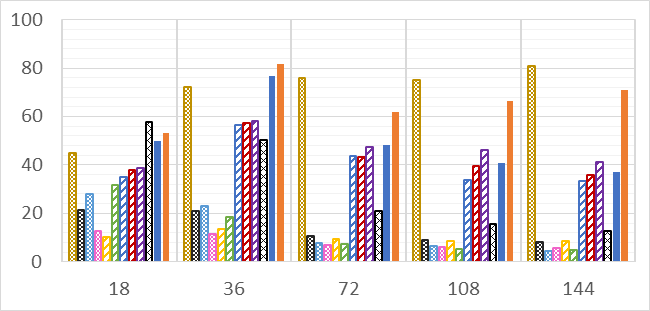}
        \\
        \vspace{-5mm}\rotatebox{90}{50\% updates} &
        \vspace{-6mm}\includegraphics[width=\linewidth]{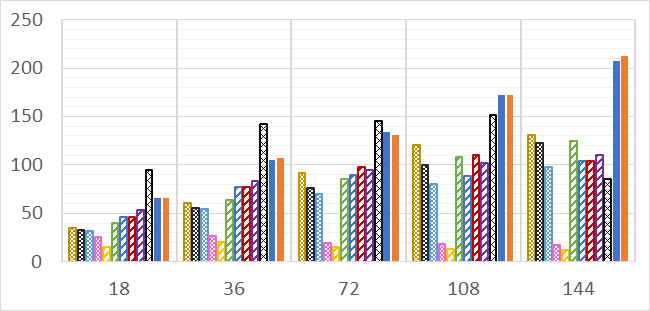} &
        \vspace{-6mm}\includegraphics[width=\linewidth]{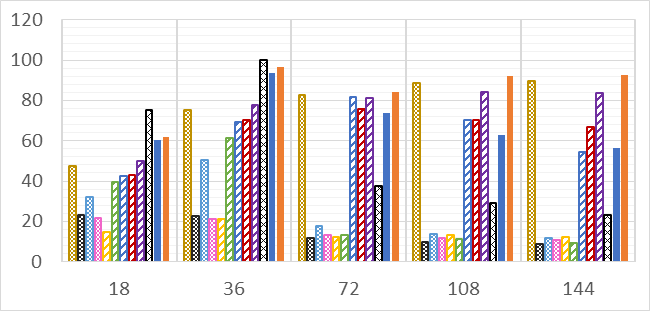}
        \\
        \vspace{-6mm}\rotatebox{90}{20\% updates} &
        \vspace{-6mm}\includegraphics[width=\linewidth]{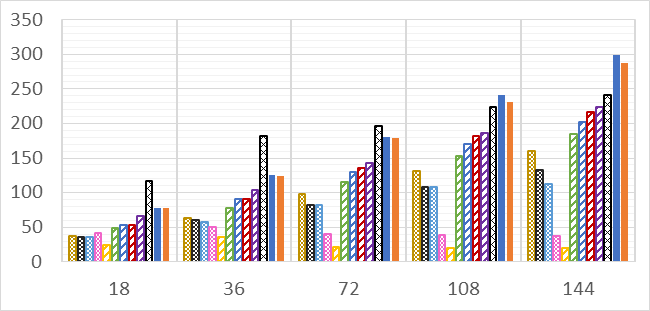} &
        \vspace{-6mm}\includegraphics[width=\linewidth]{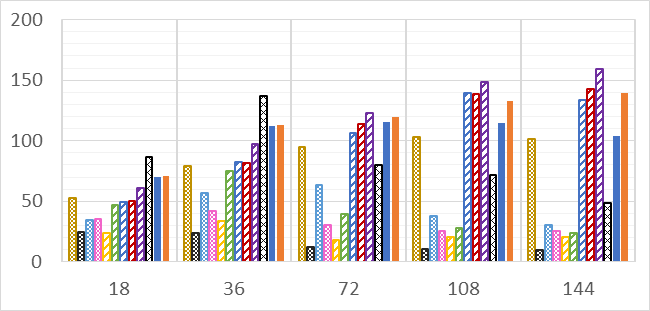}
        \\
        \vspace{-6mm}\rotatebox{90}{5\% updates} &
        \vspace{-6mm}\includegraphics[width=\linewidth]{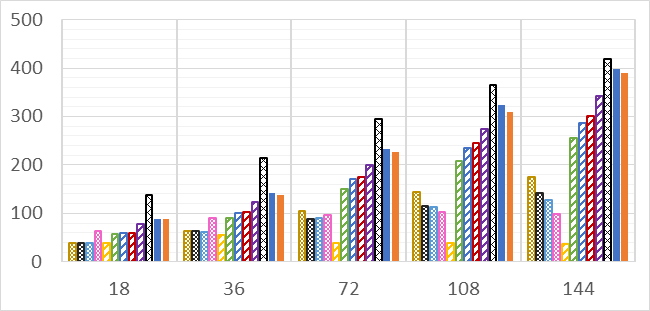} &
        \vspace{-6mm}\includegraphics[width=\linewidth]{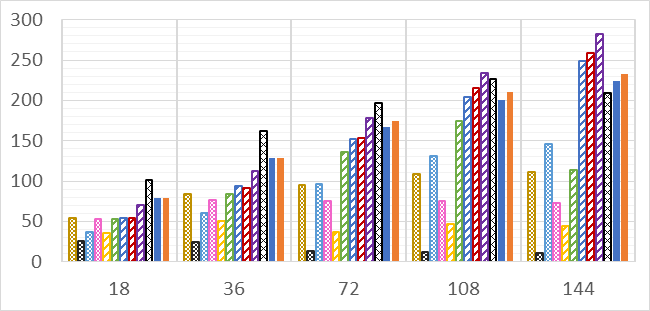}
    \end{tabular}
\end{minipage}
\vspace{-2mm}
\includegraphics[width=0.8\linewidth]{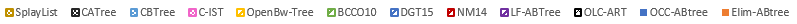}
\vspace{-3mm}
\caption{SetBench microbenchmark with 10K keys. x-axis: number of threads. y-axis: operations per $\mu$s.}
\label{fig-results-20k}
\vspace{-3mm}
\end{figure*}

\begin{figure*}[t]
    \centering
    \setlength\tabcolsep{0pt}
    \vspace{-1mm}
\begin{minipage}{1\linewidth}
    \centering
    \begin{tabular}{m{0.04\linewidth}m{0.48\linewidth}m{0.48\linewidth}}
        &
        \fcolorbox{black!50}{black!20}{\parbox{\dimexpr \linewidth-2\fboxsep-2\fboxrule}{\centering {Zipf parameter = 0 (Uniform)}}} &
        \fcolorbox{black!50}{black!20}{\parbox{\dimexpr \linewidth-2\fboxsep-2\fboxrule}{\centering {Zipf parameter = 1 (Skewed)}}}
        \\
        \rotatebox{90}{100\% updates} &
        \includegraphics[width=\linewidth]{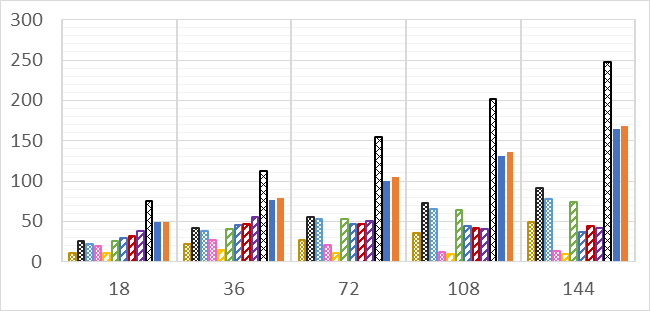} &
        \includegraphics[width=\linewidth]{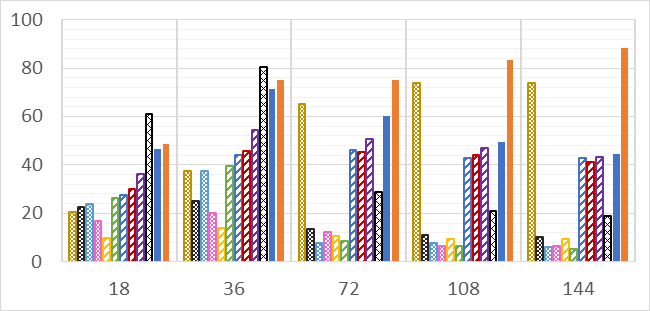}
        \\
        \vspace{-5mm}\rotatebox{90}{50\% updates} &
        \vspace{-6mm}\includegraphics[width=\linewidth]{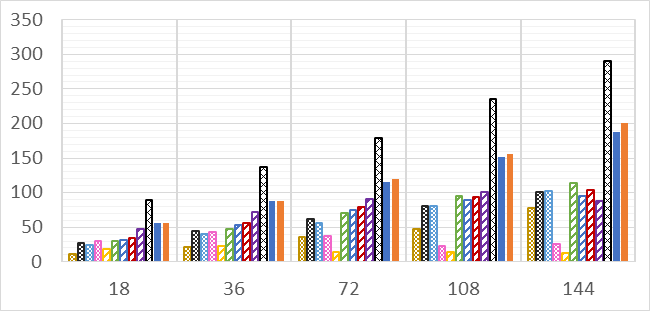} &
        \vspace{-6mm}\includegraphics[width=\linewidth]{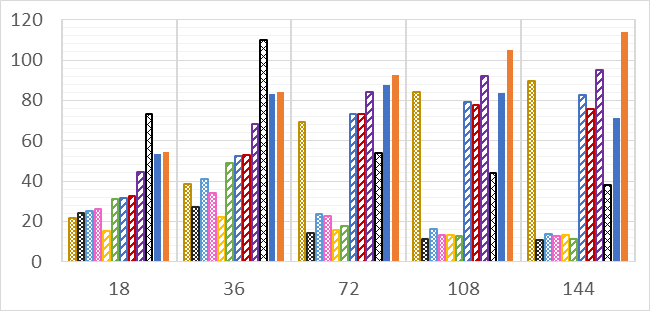}
        \\
        \vspace{-6mm}\rotatebox{90}{20\% updates} &
        \vspace{-6mm}\includegraphics[width=\linewidth]{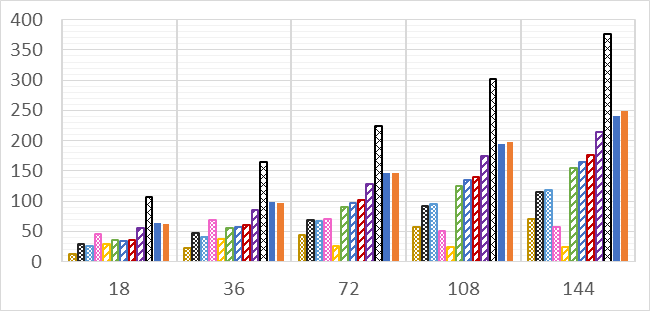} &
        \vspace{-6mm}\includegraphics[width=\linewidth]{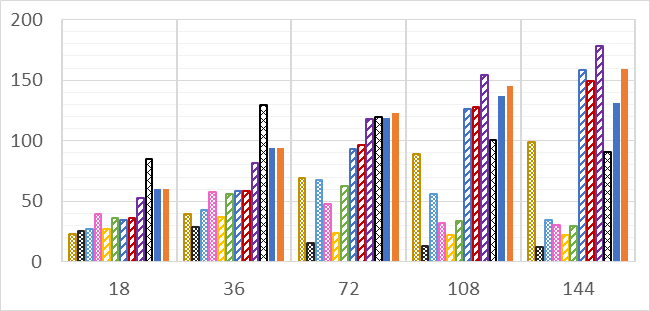}
        \\
        \vspace{-6mm}\rotatebox{90}{5\% updates} &
        \vspace{-6mm}\includegraphics[width=\linewidth]{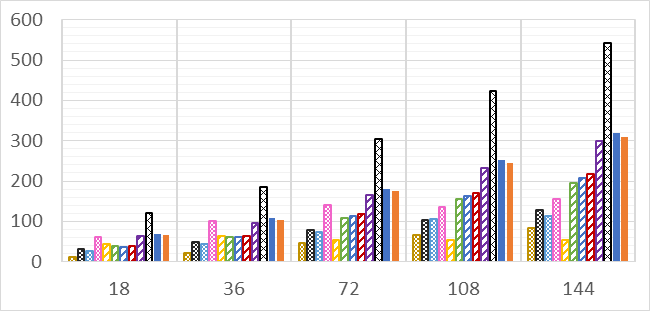} &
        \vspace{-6mm}\includegraphics[width=\linewidth]{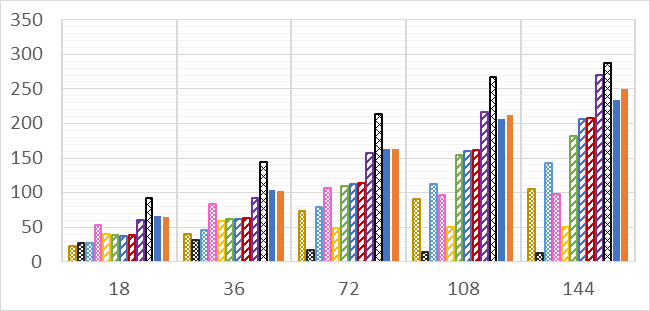}
    \end{tabular}
\end{minipage}
\vspace{-2mm}
\includegraphics[width=0.8\linewidth]{img/legend_full.png}
\vspace{-3mm}
\caption{SetBench microbenchmark with 100K keys. x-axis: number of threads. y-axis: operations per $\mu$s.}
\label{fig-results-200k}
\vspace{-3mm}
\end{figure*}

\begin{figure*}[t]
    \centering
    \setlength\tabcolsep{0pt}
    \vspace{-1mm}
\begin{minipage}{1\linewidth}
    \centering
    \begin{tabular}{m{0.04\linewidth}m{0.48\linewidth}m{0.48\linewidth}}
        &
        \fcolorbox{black!50}{black!20}{\parbox{\dimexpr \linewidth-2\fboxsep-2\fboxrule}{\centering {Zipf parameter = 0 (Uniform)}}} &
        \fcolorbox{black!50}{black!20}{\parbox{\dimexpr \linewidth-2\fboxsep-2\fboxrule}{\centering {Zipf parameter = 1 (Skewed)}}}
        \\
        \rotatebox{90}{100\% updates} &
        \includegraphics[width=\linewidth]{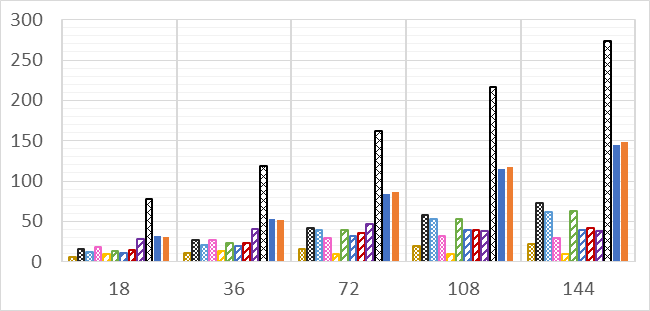} &
        \includegraphics[width=\linewidth]{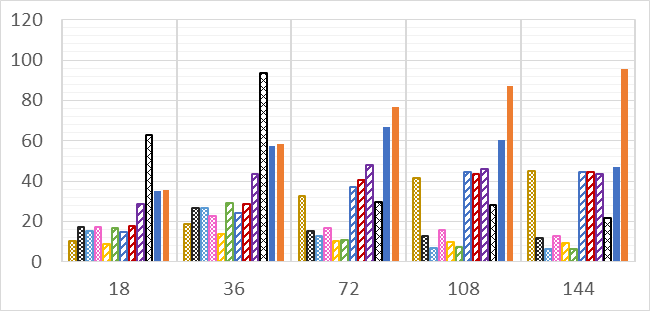}
        \\
        \vspace{-5mm}\rotatebox{90}{50\% updates} &
        \vspace{-6mm}\includegraphics[width=\linewidth]{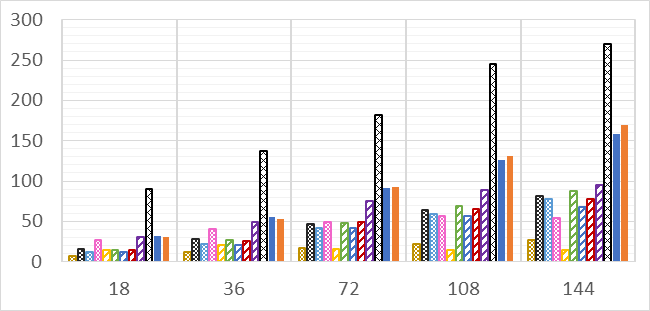} &
        \vspace{-6mm}\includegraphics[width=\linewidth]{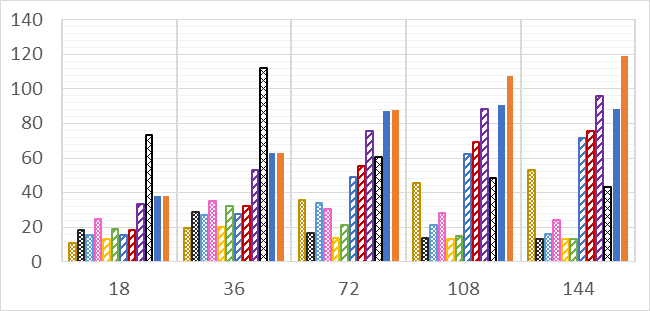}
        \\
        \vspace{-6mm}\rotatebox{90}{20\% updates} &
        \vspace{-6mm}\includegraphics[width=\linewidth]{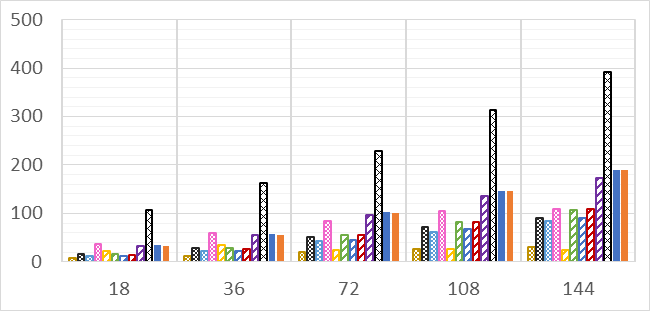} &
        \vspace{-6mm}\includegraphics[width=\linewidth]{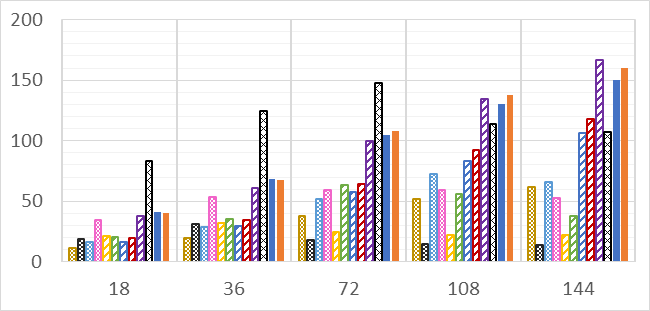}
        \\
        \vspace{-6mm}\rotatebox{90}{5\% updates} &
        \vspace{-6mm}\includegraphics[width=\linewidth]{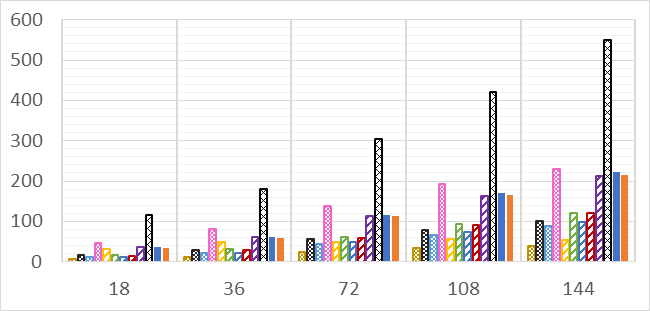} &
        \vspace{-6mm}\includegraphics[width=\linewidth]{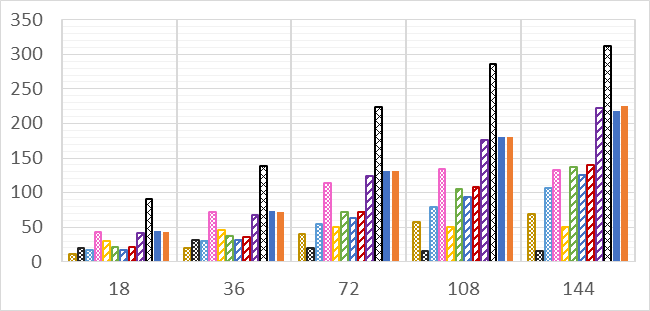}
    \end{tabular}
\end{minipage}
\vspace{-2mm}
\includegraphics[width=0.8\linewidth]{img/legend_full.png}
\vspace{-3mm}
\caption{SetBench microbenchmark with 1M keys. x-axis: number of threads. y-axis: operations per $\mu$s.}
\label{fig-results-2m}
\vspace{-3mm}
\end{figure*}

\begin{figure*}[t]
    \centering
    \setlength\tabcolsep{0pt}
    \vspace{-1mm}
\begin{minipage}{1\linewidth}
    \centering
    \begin{tabular}{m{0.04\linewidth}m{0.48\linewidth}m{0.48\linewidth}}
        &
        \fcolorbox{black!50}{black!20}{\parbox{\dimexpr \linewidth-2\fboxsep-2\fboxrule}{\centering {Zipf parameter = 0 (Uniform)}}} &
        \fcolorbox{black!50}{black!20}{\parbox{\dimexpr \linewidth-2\fboxsep-2\fboxrule}{\centering {Zipf parameter = 1 (Skewed)}}}
        \\
        \rotatebox{90}{100\% updates} &
        \includegraphics[width=\linewidth]{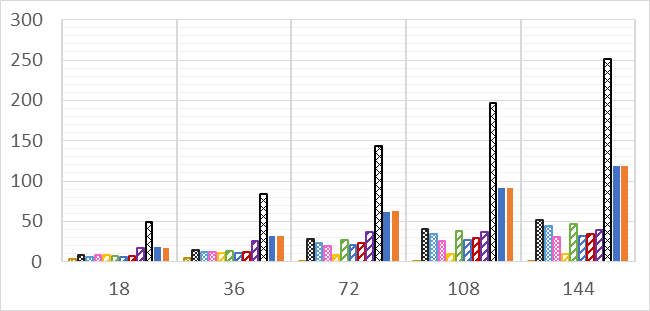} &
        \includegraphics[width=\linewidth]{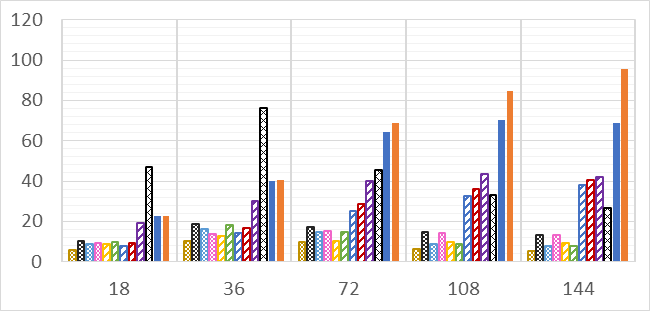}
        \\
        \vspace{-5mm}\rotatebox{90}{50\% updates} &
        \vspace{-6mm}\includegraphics[width=\linewidth]{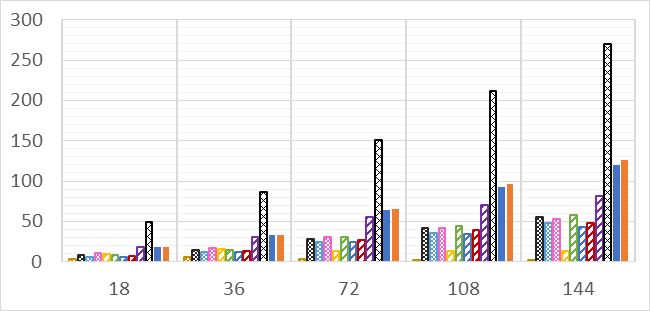} &
        \vspace{-6mm}\includegraphics[width=\linewidth]{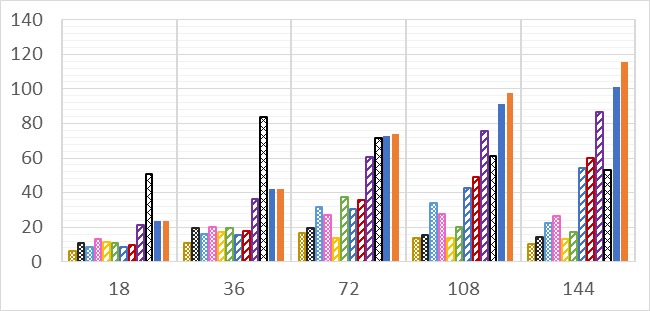}
        \\
        \vspace{-6mm}\rotatebox{90}{20\% updates} &
        \vspace{-6mm}\includegraphics[width=\linewidth]{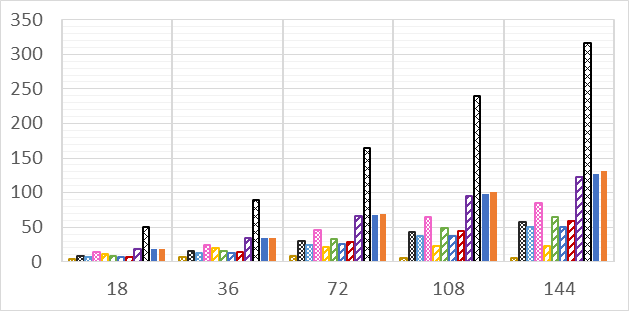} &
        \vspace{-6mm}\includegraphics[width=\linewidth]{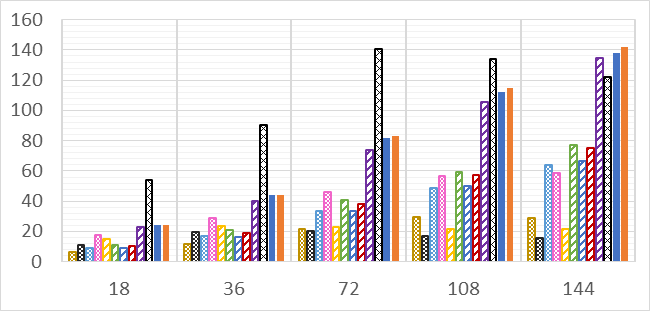}
        \\
        \vspace{-6mm}\rotatebox{90}{5\% updates} &
        \vspace{-6mm}\includegraphics[width=\linewidth]{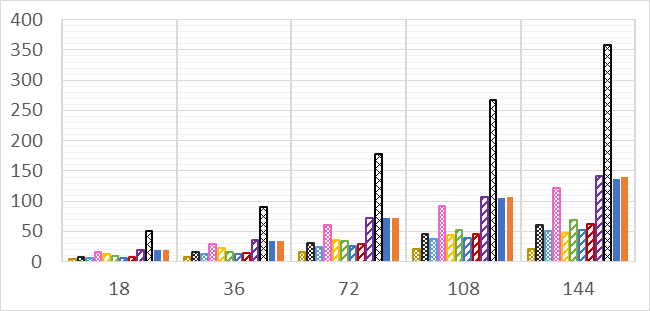} &
        \vspace{-6mm}\includegraphics[width=\linewidth]{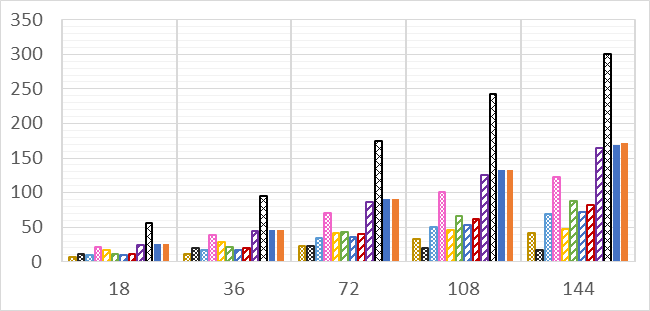}
    \end{tabular}
\end{minipage}
\vspace{-2mm}
\includegraphics[width=0.8\linewidth]{img/legend_full.png}
\vspace{-3mm}
\caption{SetBench microbenchmark with 10M keys. x-axis: number of threads. y-axis: operations per $\mu$s.}
\label{fig-results-20m}
\vspace{-3mm}
\end{figure*}

In this section, we compare our trees with other leading dictionary implementations using both a synthetic microbenchmark and the Yahoo! Cloud Serving Benchmark~\cite{YCSB}, as implemented in SetBench (a framework for benchmarking concurrent dictionaries)~\cite{C-IST}.

\textbf{See Section~\ref{sec:related} for descriptions of the data structures included in our graphs.}
In the following figures, solid bars represent our trees, striped bars represent data structures that are distribution-naïve (\lfab, BCCO10, NM14, DGT15, OLC-ART, OpenBw-Tree), and checkered bars represent data structures that adapt their structure to the access distribution (CATree, CBTree, SplayList), or try to exploit it to obtain faster searches (C-IST).

\vspace{1mm}\mypara{System}
Our volatile memory experiments (Figure~\ref{fig-results-2m}, Figure~\ref{fig:YCSB}) run on a 4-socket Intel Xeon Gold 5220 with 18 cores per socket and 2 hyperthreads (HTs) per core (for a total of 144 hardware threads), and 192GiB of RAM.
Our persistent memory experiments (Section~\ref{sec:persistenceExp}) run on a 2-socket Intel Xeon Gold 5220R CLX with 24 cores per socket and 2 HTs per core (for a total of 96 hardware threads), 192GiB of RAM, and 1536GiB of Intel 3DXPoint NVRAM.
In all of our experiments, we pin threads such that the first socket is saturated before the second socket is used, and so on.
Additionally, the pinning ensures that all cores on a socket are used before hyperthreading was engaged.
The machine runs Ubuntu 20.04.2 LTS.
All code is written in C++ and compiled with G++ 7.5.0-3 with compilation options \var{-std=c++17 -O3}.
We use the scalable allocator jemalloc 5.0.1-25.
We use \var{numactl -i all} to interleave pages evenly across all NUMA nodes.

\vspace{1mm}\mypara{Memory reclamation}
All data structures use DEBRA, a variant of epoch-based memory reclamation~\cite{DEBRA}, except the SplayList and FPTree (which do not reclaim memory) and the OpenBw-Tree (which uses a different epoch-based reclamation scheme which we were unable to change due to its complexity).

\vspace{1mm}\mypara{Methodology}
Each experiment \textit{run} starts with a prefilling phase, in which a random subset of 8-byte keys and values are inserted into the data structure until the data structure size reaches its expected steady-state size (half of the key range, since the proportions of inserts and deletes are equal in our experiments).
After the prefilling phase, $n$ threads are created and started together, and the \textit{measured} phase of the experiment begins.
In this phase, each thread repeatedly selects an operation (\ins, \del, \find) based on the desired update frequency, and selects a key according to a uniform or Zipfian distribution. 
This continues for 10 seconds, and the total \textit{throughput} (operations completed per second) is recorded.
Each experiment is run three times, and our graphs plot the averages of these runs.

\vspace{1mm}\mypara{Validation}
To sanity-check the correctness of the evaluated data structures, each thread keeps track of the sum of keys that it successfully inserts and deletes.
At the end of each run, all threads' sums are added to a grand total, and the grand total must match the sum of keys in the data structure.

\vspace{-2mm}\subsection{SetBench microbenchmark}
\vspace{-1mm}\mypara{Read-mostly (5\% updates)}
Performance on read-mostly workloads has been shown to be correlated with short paths to keys, since shorter paths resulted in fewer cache misses (which dominate runtime in read-mostly workloads)~\cite{C-IST}.
Thus, we expected the (a,b)-trees, OpenBw-Tree, CBTree, and C-IST (all of which use fat nodes containing many pointers) to be the fastest. However, this is only true for the (a,b)-trees.
The C-IST, which is heavily optimized for search-only workloads, performs well in the uniform case, but performs much worse in the Zipfian case.
The OpenBw-Tree performs poorly in both workloads. However, a short experiment suggests that both the C-IST and OpenBw-Tree perform comparably to the (a,b)-trees with \textit{no} updates. The extent to which just 5\% updates affects their read performance is surprising.
The BSTs (BCCO10, NM14) have similar performance relative to one another (roughly half that of the (a,b)-trees).

The CBTree and SplayList fell short of our expectations on the Zipfian workload.
We expected that splaying would greatly accelerate searches (especially since the splayed key is never removed in a read-mostly workload), but they barely exceed their performance on the uniform workload.
The CATree's performance is reasonable on the uniform workload, but is much worse than the other data structures on the Zipfian workload.
All of the CATree's operations (even searches) require \textit{locking} a leaf. 



\vspace{3mm}\mypara{Update-heavy (50\%, 100\% updates)}
Overall, throughput decreases as the proportion of updates increases (as expected).

On \textit{uniform} update-heavy workloads, the \lfab\ and the C-IST scale much worse than our trees.
The \lfab\ creates a new copy of a (fat) node every time a key is inserted.
The C-IST must completely rebuild the tree after $n/4$ updates, where $n$ is the size of the tree.
As a result, both incur high overhead for updates. 
The other competing trees have better scaling but relatively poor absolute throughput.
Our trees are roughly 2x faster than the leading competitor (the CATree) in the uniform 100\% workload.

On \textit{skewed} update-heavy workloads, the benefit of publishing elimination becomes clear.
The \elim\ is significantly faster than the \occ\ on these workloads, with the gap increasing as the proportion of updates does.
At 100\% updates, the \elim\ is up to 2.5x as fast as its fastest competitor.
The C-IST still scales poorly on these workloads, but the \lfab\ performs extremely well, outperforming even the \occ\ at 50\% updates.
At relatively low update rates, the benefit of lock-freedom (i.e., faster threads \textit{helping} slower threads) exceeds the overhead of allocating new nodes for each key inserted.
At the highest update rates, the overhead of managing memory dominates the performance of the \lfab.

NM14 scales much better than BCCO10 in these workloads, slightly exceeding the performance of the \occ.
This is because searches in BCCO10 have to restart many times because of frequent updates along the path to the frequently-accessed keys.
A notable outlier in the skewed update-heavy workloads is the SplayList, which had relatively poor read-mostly performance but matches the performance of NM14 and the \lfab\ on the skewed update-only workload.
This may be partially because the SplayList never frees memory (simply marking keys as deleted instead), so reinserting a key that was once in the SplayList requires no memory allocation (which normally adds considerable overhead to the other data structures).
This approach is quite efficient in our microbenchmark, but might be less so if the set of keys that are \textit{ever} inserted is much larger than the set of keys that are \textit{typically} in the dictionary.

\begin{figure}
\centering
\includegraphics[width=0.46\textwidth]{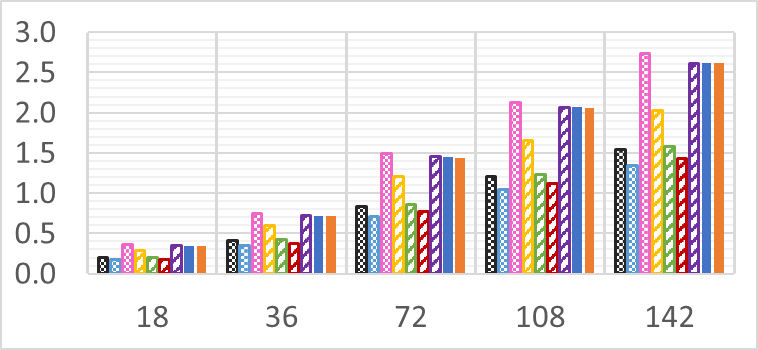}
\vspace{-2mm}
\includegraphics[width=0.9\linewidth]{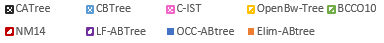}
\vspace{-2mm}
\caption{YCSB throughput on Workload A. x-axis: number of threads. y-axis: transactions per $\mu$s.}
\label{fig:YCSB}
\vspace{-2mm}
\end{figure}

\vspace{-2mm}\subsection{YCSB}
The Yahoo! Cloud Serving Benchmark (YCSB) is a standard tool for benchmarking concurrent database indices~\cite{YCSB}. We run the benchmark using the above data structures as the database index.
We run Workload A (50\% reads, 50\% writes, Zipf factor 0.5) from the YCSB standard workloads, with a uniform access distribution and an initial data structure size of 100M (Figure~\ref{fig:YCSB}).
Figure~\ref{fig:YCSB} does not contain the SplayList since it does not reclaim memory and consequently caused the system to run out of memory.
Note that the writes in the YCSB workload are to the database itself, not the index. 
That is, a YCSB write simply \textit{reads} the row pointer from the index, then locks the row, updates it, and unlocks it (without modifying the index).
As a result, the results are closest to our microbenchmark uniform read-mostly workload.

\subsection{Persistence experiments}
\label{sec:persistenceExp}
Of the concurrent persistent trees in Section~\ref{sec:related}, only the FPTree and RNTree have publicly available implementations that passed our validation scheme (both implementions were from~\cite{RNTree}).
However, these implementations do not reclaim memory.

Figure~\ref{fig:FPTree-comparisons} shows the results of our microbenchmark on our persistent memory machine.
Even with the overhead of reclaiming memory, the \pmocc\ and \pmelim\ outperform both the FPTree and the RNTree on all thread counts. (Results on smaller/larger key ranges and different update percentages were similar).
In the uniform case, the FPTree performs similarly to our trees at low thread counts but exhibits extreme negative scaling when running on 2 sockets (96 threads).
However, this might be an artifact of this particular implementation, since the original paper shows better scaling on 2 sockets.
The RNTree performs worse than the FPTree on uniform workloads, but slightly better on the Zipfian workload.
Both the FPTree and RNTree also exhibit negative scaling in the Zipfian case, even when running on only one socket.

\begin{figure}[t]
\centering
\includegraphics[width=\linewidth]{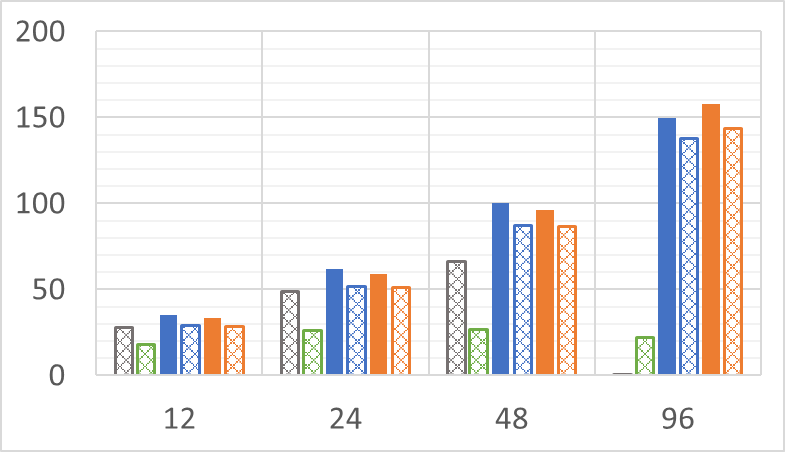}

\includegraphics[width=\linewidth]{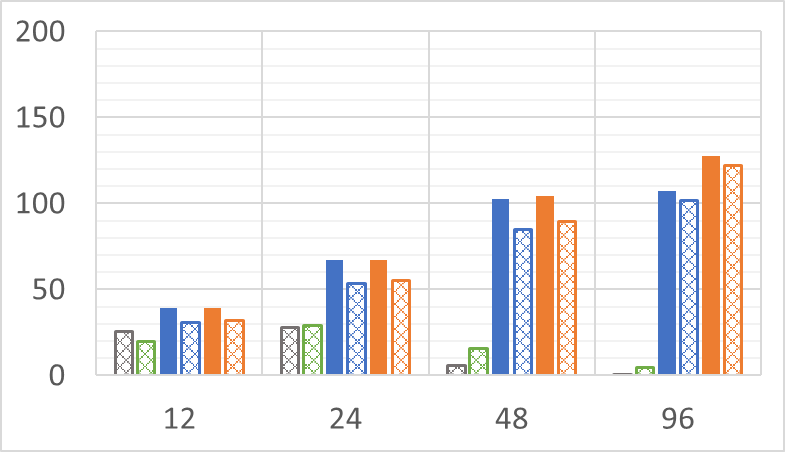}

\includegraphics[width=\linewidth]{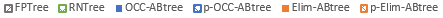}
\vspace{-5mm}
\caption{Comparing with other persistent trees: SetBench microbenchmark with 1M keys, 50\% updates (25\% insert and 25\% delete). Left: Uniform access distribution. Right: Zipfian access distribution (with Zipf factor 1). x-axis: number of threads. y-axis: operations per $\mu$s.}
\label{fig:FPTree-comparisons}
\vspace{-3mm}
\end{figure}

We attempted to compare with an unofficial implementation of the BzTree~\cite{PiBench}, but encountered failures during validation. The implementors mentioned that the errors might be fixable, but were unable to produce a fix in time for this publication.
Table~\ref{tab:persistence} shows the persistence overhead of our trees.
Comparing with the overheads listed in the BzTree paper, the overhead of our trees is slightly less than the BzTree's average persistence cost of 5\% on a uniform 10\%-update workload and 12\% on a uniform 50\%-update workload.
\begin{table}[h]
    \centering
    \begin{tabular}{|c|c|c|c|c|c|c|}
    \hline
    & \multicolumn{3}{c|}{Uniform} & \multicolumn{3}{c|}{Zipfian} \\
    \hline
    \textbf{Update rate:} & \textbf{100\%} & \textbf{50\%} & \textbf{10\%} & \textbf{100\%} & \textbf{50\%} & \textbf{10\%} \\
    \hline
    \pmocc\  & -16\% & -8\% & -6\% & -6\% & -9\% & -7\% \\
    \hline
    \pmelim\ & -14\% & -9\% & -1\% & -5\% & -5\% & -5\%\\
    \hline
    \end{tabular}
    \caption{Change in throughput upon enabling persistence. 96 threads, 1 million keys.} 
    \label{tab:persistence}
\end{table}

\section{Future work and conclusion}
It would be interesting to explore the interaction between publishing elimination and different data structure semantics.
Publishing elimination remains correct for some alternative definitions of insert.
If insert replaces existing keys but returns no value (instead of simply returning the existing key), publishing elimination does not require any modifications: the thread that successfully modifies the data structure is linearized last.

On the other hand, if insert returns the value it replaces, then publishing elimination would require changes to allow each insert in a sequence of linearized inserts to communicate its value to the next insert.


Using MCS locks (instead of test-and-test-and-set spinlocks) significantly increased the scalability of the \occ. Using NUMA-aware locks like HCLH~\cite{HCLH}, lock cohorting~\cite{LockCohorting}, or NUMA-aware reader-writer locks~\cite{NumaReaderWriter} might also be a simple way of improving performance further.

We have introduced the \occ, which provides good performance in both read-mostly and update-heavy workloads, and the \elim\ which uses publishing elimination to further improve performance in high-contention workloads.
Finally, we have presented persistent versions of our trees that require only minor modifications and are still highly performant.


\begin{acks}
This work was supported by: the Natural Sciences and Engineering Research Council of Canada (NSERC) Collaborative Research and Development grant: CRDPJ 539431-19, the Canada Foundation for Innovation John R. Evans Leaders Fund with equal support from the Ontario Research Fund CFI Leaders Opportunity Fund: 38512, Waterloo Huawei Joint Innovation Lab project ``Scalable Infrastructure for Next Generation Data Management Systems'', NSERC Discovery Launch Supplement: DGECR-2019-00048, NSERC Discovery Program under the grants: RGPIN-2019-04227 and RGPIN-04512-2018, and the University of Waterloo. We would also like to thank the reviewers for their insightful comments. 
\end{acks}

\bibliographystyle{ACM-Reference-Format}
\bibliography{references}

\section{Artifact Description}
The artifact containing the source code for all algorithms and experiments run in this paper is available at \url{https://doi.org/10.5281/zenodo.5733351}.

Note: Sudo permission may be required to execute the following instructions.
\begin{enumerate}
\item Install the latest version of Docker on your system.
The artifact was tested with the Docker version 20.10.2. (Instructions to install Docker can
be found at
\url{https://docs.docker.com/get-docker/}.)

\item Download the artifact from Zenodo at URL:
\url{https://doi.org/10.5281/zenodo.5733351}.

\item Load the downloaded
docker image:\\
\texttt{\$ sudo docker load -i setbench.tar.gz}

\item Verify that image was loaded:\\
\texttt{\$ sudo docker images}

\item Start a docker container from the loaded image:\\
\texttt{\$ sudo docker run -p 2222:22 -d --privileged --name setbench setbench}

\item Verify that the container is running (you should see a setbench container):\\
\texttt{\$ sudo docker container ls}

\item SSH into the running container with password \texttt{root}:\\
\texttt{\$ ssh root@localhost -p 2222}

\item Follow the instructions in \texttt{setbench/README.md} to replicate results. Note that you might have to change thread counts in the \texttt{run.sh} and \texttt{run\_persistence\_cost.sh} scripts to match the constraints of your system.
\end{enumerate}

\end{document}